\newtheorem{thm}{Theorem}
\newtheorem{prop}[thm]{Proposition}
\newtheorem{lem}[thm]{Lemma}
\newtheorem{cor}[thm]{Corollary}
\theoremstyle{remark}
\newtheorem{rem}{Remark}
  \newcommand{\yt}{\ytableaushort}
  \newcommand{\yd}{\ydiagram}
  \newcommand{\oo}{\infty}
  \newcommand{\oast}{\circledast}
\renewcommand{\d}{\mathrm{d}}
  \newcommand{\im}{\operatorname{im}}
  \newcommand{\tr}{\operatorname{tr}}
  \newcommand{\sso}{\subset}
  \newcommand{\sse}{\subseteq}
  \newcommand{\Secs}{\Gamma}
  \newcommand{\K}{\mathcal{K}}
  \newcommand{\KY}{\mathcal{KY}}
  \newcommand{\Lie}{\mathcal{L}}
  \newcommand{\G}{\mathrm{G}}
  \newcommand{\GL}{\mathrm{GL}}
  \newcommand{\R}{\mathbb{R}}
  \newcommand{\SC}{\mathit{sc}}
  \newcommand{\tc}{\mathit{tc}}
  \newcommand{\fc}{\mathit{fc}}
  \newcommand{\pc}{\mathit{pc}}
  \newcommand{\lf}{\mathit{lf}}
  \newcommand{\slf}{\mathit{slf}}
  \newcommand{\tlf}{\mathit{tlf}}
  \newcommand{\flf}{\mathit{flf}}
  \newcommand{\plf}{\mathit{plf}}
  \newcommand{\CMan}{\mathfrak{CMan}}
\title{Cohomology with causally restricted supports}
\author{Igor Khavkine\\
Department of Mathematics, University of Trento \\
and TIFPA-INFN, Trento\\
Via Sommarive, 14,
38123 Povo (TN) Italy\\ \texttt{igor.khavkine@unitn.it}}
\begin{document}
\maketitle


\begin{abstract}
De~Rham cohomology with spacelike compact and timelike compact supports has
recently been noticed to be of importance for understanding the structure of
classical and quantum Maxwell theory on curved spacetimes. Similarly causally
restricted cohomologies of different differential complexes play a similar role
in other gauge theories.  We introduce a method for computing these causally
restricted cohomologies in terms of cohomologies with either compact or
unrestricted supports.  The calculation exploits the fact that the
de~Rham-d'Alembert wave operator can be extended to a chain map that is
homotopic to zero and that its causal Green function fits into a convenient
exact sequence.  As a first application, we use the method on the de~Rham
complex, then also on the Calabi (or Killing-Riemann-Bianchi) complex, which
appears in linearized gravity on constant curvature backgrounds.  We also
discuss applications to other complexes, as well as generalized causal
structures and functoriality.
\end{abstract}

\paragraph{Keywords:} de~Rham cohomology, Lorentzian manifold, causal
structure, Calabi complex

\paragraph{MSC:} 14F40, 53C50, 58J10, 58J45

\section{Introduction}\label{sec:intro}
Recently, a number of works on the structure of classical and quantum
field theory on curved
spacetimes~\cite{dl,sdh,bds,fl,benini,hs,kh-big,kh-peierls} have made
use of de~Rham cohomology with spacelike compact supports. It appears in
the characterizations of the center of Poisson (or quantum) algebra of
observables of the Maxwell field and also of the degeneracy of the
bilinear pairing between spacelike compactly supported solutions and
compactly supported smearing functions (see
Proposition~\ref{prp:maxwell} for a specific statement). Similar
considerations appear in more general field
theories~\cite{kh-peierls,kh-big}, though involving cohomologies of
complexes that are different from the de~Rham one. One example is the
Calabi complex, which appears in linearized gravity on constant
curvature backgrounds~\cite[Sec.4.4]{kh-peierls} (see
Proposition~\ref{prp:lingrav} for a specific statement). Note that
cohomologies with timelike compact supports as well as on-shell
cohomologies (restricted to solution spaces of some particular
hyperbolic differential operators) have also appeared in the same
contexts. We shall loosely refer to all of these variations as
\emph{causally restricted cohomologies} or \emph{cohomologies with
causally restricted supports}.

It was noticed long ago~\cite{as} that non-trivial spacetime topology
can influence in a non-trivial way the construction of the classical and
quantum field theories. However, these effects had not been
systematically investigated until recently. This may explain why neither
the standard literature on differential geometry and topology, nor the
literature on relativity seem to have considered%
	\footnote{A notable exception is~\cite{lr}, which, as a byproduct of a
	different investigation, computed a few low degree cohomology groups
	with spacelike compact supports or restricted to solutions of the wave
	equation, but only on Minkowski space.} %
cohomologies with supports restricted by causal relations (like
spacelike or timelike compactness). So, given their growing importance,
they deserve independent investigation, which is the subject of this
work. We introduce a method that allows us to compute the causally
restricted cohomologies of a differential complex, provided that complex
is equipped with extra structure similar to that found in Hodge
theory~\cite{goldberg,jost}. The essentials of this method are
illustrated on the case of the de~Rham complex. Then, other applications
and implications are discussed.

In Section~\ref{sec:prelim}, we briefly outline some well known
geometric properties of the de~Rham complex on a Lorentzian spacetime,
as well as some basic facts of homological algebra. These properties
form the core of our method and are reminiscent of the structure found
in Hodge theory. Our method of computing causally restricted
cohomologies is then illustrated in Section~\ref{sec:comp} and is used
to express the various causally restricted de~Rham
cohomologies in terms of the standard de~Rham cohomologies with
unrestricted and compact supports.  Section~\ref{sec:calabi} applies the
same method to the Calabi differential complex. The Calabi complex plays
a role in linearized gravity on a constant curvature background
analogous to that of the de~Rham complex for Maxwell theory. Its
structure is briefly introduced and shown analogous to that highlighted
in Section~\ref{sec:prelim}. Then, in Section~\ref{sec:calabi-caus}, its
causally restricted cohomologies are computed in analogy with
Section~\ref{sec:comp}.  Section~\ref{sec:notes} discusses a few related
questions that have appeared in the study of gauge theories in the
framework of locally covariant classical and quantum field theory. In
particular, Sections~\ref{sec:gen-caus} and~\ref{sec:funct} deal with
the behavior of the causally restricted cohomology groups under changes
of causal structure and under embeddings, and Section~\ref{sec:other}
briefly describes how the methods applied to the de~Rham and Calabi
examples could be generalized to other differential complexes that arise
in the study of general field theories with constrains and gauge
invariance~\cite{kh-peierls,kh-big}. Finally, Section~\ref{sec:discuss}
concludes with a discussion of our results.

It should be mentioned that results very similar to those in
Section~\ref{sec:comp} have been obtained independently in the recent
work~\cite{benini}, though by a different methods. Those methods are
very specific to the de~Rham complex, including its invariance
properties under topological homotopies. Such strong invariance
properties certainly do not hold for other differential complexes. So it
is noteworthy that the content of our Sections~\ref{sec:calabi}
and~\ref{sec:notes} goes beyond~\cite{benini} in several directions.

\section{Preliminaries}\label{sec:prelim}
Fix an $n$-dimensional smooth manifold $M$ ($n\ge 2$) with a Lorentzian metric $g$
such that $(M,g)$ is an oriented, time-oriented space, globally
hyperbolic spacetime~\cite{wald,hawking-ellis,oneill,beem-ehrlich}. Recall
that, according to the \emph{Geroch splitting theorem}, there exists a
diffeomorphism $M \cong \R\times \Sigma$ (non-unique, of course) where
the corresponding projection $t\colon M \to \R$ is a \emph{Cauchy
temporal function}~\cite{geroch,bs1,bs2}.

Let $\Omega^p(M)$ denote the linear space of differential $p$-forms on
$M$ and let $\d \colon \Omega^p(M) \to \Omega^{p+1}(M)$ denote the
de~Rham differential, which together form the \emph{de~Rham complex}
\begin{equation}\label{eq:dr-cpx}
\begin{tikzcd}
	0 \arrow{r} &
	\Omega^0(M) \arrow{r}{\d} &
	\Omega^1(M) \arrow{r}{\d} &
	\cdots \arrow{r}{\d} &
	\Omega^n(M) \arrow{r} &
	0 ,
\end{tikzcd}
\end{equation}
This sequence of maps being a \emph{complex} means that each pair of
successive maps compose to zero, $\d \circ \d = 0$.
Its \emph{cohomology in degree $p$} is defined and denoted by
\begin{equation*}
	H^p(M) :=
		\frac{\ker (\d\colon \Omega^p(M) \to \Omega^{p+1}(M))}
			{\im (\d\colon \Omega^{p-1}(M) \to \Omega^p(M))} .
\end{equation*}
The cohomology of any other complex is defined in a similar way. It is
well known that this de~Rham cohomology is isomorphic, $H^p(M) \cong
H^p(M,\R)$, to the singular cohomology of $M$ with coefficients in
$\R$~\cite[Thm.15.8]{bott-tu}, to the \v{C}ech cohomology of $M$ with
coefficients in $\R$~\cite[Thm.8.9]{bott-tu}, and to the sheaf
cohomology of $M$ with coefficients in the sheaf of locally constant
$\R$-valued functions~\cite[Prp.10.6]{bott-tu}, all of which being
isomorphic are denoted by $H^p(M,\R)$. If we replace $\Omega^p(M)$
in~\eqref{eq:dr-cpx} with $\Omega_c^p(M)$, the linear space of
differential $p$-forms with compact support, the corresponding de~Rham
cohomology of $M$ with compact supports, which satisfies the following
isomorphism: $H_c^p(M)^* \cong H^p(M,\R)$. That isomorphism is
implemented by a non-degenerate bilinear pairing between
$\Omega^p(M)$ and $\Omega_c^{n-p}(M)$,
\begin{equation}
	\langle \alpha, \beta \rangle = \int_M \alpha\wedge \beta ,
\end{equation}
which descends to a non-degenerate bilinear pairing between $H^p(M)$ and
$H_c^p(M)$. This result is known as \emph{Poincar\'e
duality}~\cite[Rmk.5.7]{bott-tu}.

Using the Hodge star operator ${*}\colon \Omega^p(M) \to
\Omega^{n-p}(M)$ associated to the metric $g$, we can define the de~Rham
co-differential $\delta = {*}\d{*} \colon \Omega^p(M) \to
\Omega^{p-1}(M)$. Next, we define the so-called
\emph{de~Rham-d'Alembertian} or \emph{wave operator} $\square\colon
\Omega^p(M) \to \Omega^p(M)$,
\begin{equation}\label{eq:dal-def}
	\square = \d\delta + \delta \d .
\end{equation}
This operator differs from the simple tensor d'Alembertian $\nabla_a
\nabla^a$ by terms of lower differential order. From its very
definition, we see that the d'Alembertian is a cochain map from the
de~Rham complex to itself, $\d\square = \square\d$, which is moreover
cochain homotopic to zero, with the co-differential $\delta$ the
corresponding cochain homotopy. That is, it induces the zero map from
$H^p(M)$ to itself. The following diagram illustrates the discussion:
\begin{equation}\label{eq:dR-homotopy}
\begin{tikzcd}
	0 \arrow{r} &
	\Omega^0(M) \arrow{r}{\d} \arrow{d}{\square} &
	\Omega^1(M) \arrow{r}{\d} \arrow{d}{\square}
		\arrow[dashed]{dl}[swap]{\delta} &
	\cdots \arrow{r}{\d}
		\arrow[dashed]{dl}[swap]{\delta} &
	\Omega^n(M) \arrow{r} \arrow{d}{\square}
		\arrow[dashed]{dl}[swap]{\delta} &
	0 \\
	0 \arrow{r} &
	\Omega^0(M) \arrow{r}{\d} &
	\Omega^1(M) \arrow{r}{\d} &
	\cdots \arrow{r}{\d} &
	\Omega^n(M) \arrow{r} &
	0
\end{tikzcd} ,
\end{equation}
where the rows constitute (de~Rham) complexes, the solid arrows commute,
and the dashed arrows illustrate the cochain homotopy. This is an
important observation that will be used in an essential way in
Section~\ref{sec:comp}. Note that the formula~\eqref{eq:dal-def} is
analogous to the formula for the Hodge-de~Rham Laplacian in Riemannian
geometry. There, the observation that this Laplacian is homotopic to
zero lies at the foundation of Hodge theory~\cite{goldberg,jost}.

The causal structure on $M$ defined by the Lorentzian metric $g$ allows
us to restrict the supports of differential forms in other ways as well.
Recall that, for a subset $S \sse M$, by $J^\pm(S)$ we denote the subset
of $M$ that can be reached from $S$ by piecewise smooth, future ($+$) or
past ($-$) directed causal curves, while $J(S) = J^+(S) \cup J^-(S)$.
A closed set $S \sse M$ is said to be \emph{retarded} if $S \sse J^+(K)$
for some compact $K$, \emph{advanced} if $S\sse J^-(K)$ for some compact
$K$, \emph{spacelike compact} if it $S\sse J(K)$ for some compact $K$,
\emph{past compact} if $S\cap J^-(K)$ is compact for every compact $K$,
\emph{future compact} if $S\cap J^+(K)$ is compact for every compact
$K$, and \emph{timelike compact} if $S$ is both past and future
compact~\cite{sanders,baer}. Timelike compactness is also equivalent to
the property of having compact intersection with every spacelike compact
set. Let $\Omega^p_X(M)$, with $X=+,-,\SC,\pc,\fc$ or $\tc$, denote the
linear space of differential $p$-forms with, respectively, retarded,
advanced, spacelike compact, past compact, future compact or timelike
compact supports. For brevity, we refer to these spaces as space of
forms with \emph{causally restricted supports}.

Of course, since differential operators preserve supports, $\square$ also
restricts to $\square\colon \Omega_c^p(M) \to \Omega_c^p(M)$. By the
same reasoning, the spaces of forms with causally restricted supports
are also preserved by both $\d$ and $\square$. We define de~Rham
cohomology with causally restricted supports in the obvious way and
denote it by $H^p_X(M)$, with $X=+,-,\SC,\pc,\fc$ or $\tc$. Let
$\Omega^p_\square(M)$ and $\Omega^p_{\square,X}(M)$ denote the kernel of
the wave operator $\square$, also known as its \emph{solution space}, in
the spaces of forms with corresponding supports. Finally, by the
cochain map property, the de~Rham differential restricts to the kernel
of the wave operator, hence defining the de~Rham cohomology groups
$H^p_\square (M)$ and $H^p_{\square,X}(M)$ of solutions.

The specific way in which these causally restricted cohomologies are of
importance in Maxwell gauge theory is summarized in the following
proposition. For definiteness of notation let us fix a $\chi \in
C^\oo(M)$ that is $1$ in the future of a Cauchy surface $\Sigma_+$ and
$0$ in the past of another Cauchy surface $\Sigma_-$. The following is a
special case of the general result~\cite[Thm.3.2]{kh-peierls}.
\begin{prop}\label{prp:maxwell}
Maxwell gauge theory~\cite[Sec.4.2]{kh-peierls} induces a symplectic
form on $\Omega^1_{\square,\SC}(M)$~\cite[Def.3.10]{kh-peierls} that is
non-degenerate when (a) the bilinear form on $H^1_{\SC}(M) \times
H^{n-1}_c(M)$ induced by $\langle \alpha, \beta \rangle = \int_M \alpha
\wedge \beta$ is non-degenerate and (b) the bilinear form on
$H^1_{\square,\SC}(M)$ induced by $\langle \alpha, \beta
\rangle_{\square} = \int_M \alpha \wedge {*}\square (\chi \beta)$ is
non-degenerate (where $*$ denotes the Hodge dual).
\end{prop}
From the proof of that proposition it also follows that degeneracies in
(a) and (b) can imply degeneracies in the corresponding (pre-)symplectic
structure.

The wave operator on a globally hyperbolic Lorentzian manifold is well
known to be Green hyperbolic. That is, it has \emph{advanced} and
\emph{retarded} Green functions denoted respectively $\G_+$ and $\G_-$,
$\G_\pm \colon \Omega_c^p(M) \to \Omega_\pm^p(M)$. Since $\square$
commutes with $\d$, then so do $\G_+$ and $\G_-$. The form $\beta =
\G_\pm[\alpha]$ is the unique solution of $\square \beta = \alpha$ with,
respectively, retarded or advanced support. The domain of definition of
the Green functions can be extended, in a unique way, to $\Omega_X^p(M)$
for $X=+,-,\pc$ or $\fc$. Then, the maps
\begin{equation}\label{eq:biject}
	\square\colon \Omega_Y^p(M) \to \Omega_Y^p(M), \quad
	\G_X \colon \Omega_Y^p(M) \to \Omega_Y^p(M)
\end{equation}
are mutually inverse bijections, whenever $X=+$ and $Y=+$ or $\pc$, or $X=-$
and $Y=-$ or $\fc$. The combination $\G = \G_+ - \G_-$ is known as the
\emph{causal} Green function and fits into the following, in our
terminology Green-hyperbolic, exact
sequences~\cite{bgp,ginoux,kh-peierls,kh-big,baer}
\begin{equation}\label{eq:short-sc}
\begin{tikzcd}
	0 \arrow{r} &
	\Omega_c^p(M) \arrow{r}{\square} &
	\Omega_c^p(M) \arrow{r}{\G} &
	\Omega_{\SC}^p(M) \arrow{r}{\square} &
	\Omega_{\SC}^p(M) \arrow{r} &
	0 ,
\end{tikzcd}
\end{equation}
\begin{equation}\label{eq:short-tc}
\begin{tikzcd}
	0 \arrow{r} &
	\Omega_{\tc}^p(M) \arrow{r}{\square} &
	\Omega_{\tc}^p(M) \arrow{r}{\G} &
	\Omega^p(M) \arrow{r}{\square} &
	\Omega^p(M) \arrow{r} &
	0 .
\end{tikzcd}
\end{equation}
Note that, according to the above formulas, we can represent the space
of solutions with spacelike compact or unrestricted support either as
\begin{gather}
	\Omega^p_{\square,X}(M) = \ker \square \sso \Omega^p_X(M) \\
		\text{or}\quad
	\Omega^p_{\square,X}(M) = \G[\Omega^p_Y(M)] =
		\Omega^p_Y(M)/\square\Omega^p_Y(M),
\end{gather}
with $X=\SC$ and $Y=c$, or
$X$ empty and $Y=\tc$, respectively. On the other hand, we have trivial
solution spaces $\Omega^p_{\square,X}(M) = \{0\}$ when $X = +,-,\pc$ or
$\fc$.

The existence of the Green-hyperbolic exact sequences will allow us to
later make use of the following elementary result of homological
algebra~\cite[p.17]{bott-tu}. Let $A^\bullet = (A^p,\d)$ be a cochain
complex, and similarly for $B^\bullet$ and $C^\bullet$. It is well known
that a short exact sequence of cochain maps (maps commuting with the
differentials $\d$),
\begin{equation}
\begin{tikzcd}
	0 \arrow{r} & A^\bullet \arrow{r}{f} & B^\bullet \arrow{r}{g} & C^\bullet
	\arrow{r} & 0 ,
\end{tikzcd}
\end{equation}
induces a long exact sequence in cohomology,
\begin{equation}\label{eq:long-ex}
\begin{tikzcd}
	0 \arrow{r} &
	H^0(A^\bullet,\d) \arrow{r}{[f]} &
	H^0(B^\bullet,\d) \arrow{r}{[g]} \arrow[draw=none]{d}[name=Z,shape=coordinate]{}&
	H^0(C^\bullet,\d)
		\arrow[rounded corners,
			to path={ -- ([xshift=2ex]\tikztostart.east)
			|- (Z) [near end]\tikztonodes
			-| ([xshift=-2ex]\tikztotarget.west)
			-- (\tikztotarget)}]{dll}[above,pos=1]{[\d]} \\
	&
	H^1(A^\bullet,\d) \arrow{r}{[f]} &
	H^1(B^\bullet,\d) \arrow{r}{[g]} &
	H^1(C^\bullet,\d) \arrow{r}{[\d]} & \cdots
\end{tikzcd}
\end{equation}
The maps $[f]$, $[g]$ are induced by the corresponding cochain maps,
while the $[\d]$ maps are induced by the differentials of the complexes
(hence our notation for them) and are known as \emph{connecting
homomorphisms}.

\section{Computation of cohomology groups}\label{sec:comp}
In this section, we state and prove our main results on de~Rham
cohomology with causally restricted supports. We rely essentially on the
properties of the wave operator and its Green functions, as summarized
in Section~\ref{sec:prelim}. The important properties are that the wave
operator $\square$ is cochain homotopic to zero, and the way its range
and kernel are characterized using the causal Green function $\G$. In
particular, we do not explicitly rely on the invariance properties of
the de~Rham complex under topological homotopies.

\begin{thm}\label{thm:pf}
De~Rham cohomology $H^p_X(M)$, with $X=+,-,\pc$ or $\fc$, is trivial.
\end{thm}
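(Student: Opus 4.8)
The plan is to exploit the fact that for $X = +, -, pc$ or $fc$ the wave operator $\square$ is not merely cochain homotopic to zero but is actually a bijection on $\Omega^p_X(M)$. Indeed, by~\eqref{eq:biject}, $\square\colon \Omega^p_+(M)\to\Omega^p_+(M)$ and $\square\colon\Omega^p_{pc}(M)\to\Omega^p_{pc}(M)$ are bijective with inverse $\G_+$, and symmetrically $\square\colon\Omega^p_-(M)\to\Omega^p_-(M)$, $\square\colon\Omega^p_{fc}(M)\to\Omega^p_{fc}(M)$ are bijective with inverse $\G_-$. So on each of these complexes we have an isomorphism of the complex with itself which simultaneously (i) induces the zero map on cohomology, because $\square = \d\delta + \delta\d$ is cochain homotopic to zero via $\delta$ (and $\delta$, being a differential operator, preserves each causally restricted support class, so the homotopy~\eqref{eq:dR-homotopy} restricts to $\Omega^\bullet_X(M)$), and (ii) induces an isomorphism on cohomology, because it is an isomorphism of complexes.

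Concretely, I would argue as follows. Fix $X\in\{+,-,pc,fc\}$ and let $[\alpha]\in H^p_X(M)$ be represented by a closed form $\alpha\in\Omega^p_X(M)$. Since $\square$ is invertible on $\Omega^p_X(M)$, write $\alpha = \square\beta$ with $\beta = \G_\pm[\alpha]\in\Omega^p_X(M)$ (the appropriate Green function for the given $X$). Note $\beta$ is closed: $\d\beta = \d\G_\pm[\alpha] = \G_\pm[\d\alpha] = 0$, using that $\G_\pm$ commutes with $\d$. Then
\begin{equation}
	\alpha = \square\beta = (\d\delta + \delta\d)\beta = \d(\delta\beta) + \delta(\d\beta) = \d(\delta\beta),
\end{equation}
so $\alpha$ is exact in $\Omega^\bullet_X(M)$, since $\delta\beta\in\Omega^{p-1}_X(M)$. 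Hence $[\alpha] = 0$, and as $[\alpha]$ was arbitrary, $H^p_X(M) = 0$. (The case $p = 0$ is included: there $\delta = 0$, but then $\square\beta = 0$ forces $\beta = 0$ by injectivity, so $\alpha = 0$.)

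There is no serious obstacle here; the argument is a two-line diagram chase once the inputs from Section~\ref{sec:prelim} are in place. The only points requiring a word of care are that the cochain homotopy $\delta$ genuinely restricts to the subcomplex $\Omega^\bullet_X(M)$ — immediate since $\delta$ is a differential operator and differential operators do not enlarge supports — and that one uses the \emph{correct} Green function for each support class, namely $\G_+$ for $X = +, pc$ and $\G_-$ for $X = -, fc$, as dictated by~\eqref{eq:biject}. I expect the author's proof to be essentially this, perhaps phrased abstractly as: an endomorphism of a cochain complex that is both a chain homotopy equivalence to the zero map and invertible forces the cohomology to vanish.
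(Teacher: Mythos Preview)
Your proof is correct and follows essentially the same approach as the paper: the paper likewise observes that for these support classes $\square$ is both invertible on $\Omega^\bullet_X(M)$ and cochain homotopic to zero via $\delta$, and then appeals to elementary homological algebra to conclude that the cohomology vanishes. You have simply unpacked that abstract appeal into an explicit diagram chase, and your closing remark anticipates the paper's phrasing exactly.
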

\begin{proof}
Let $X=+,-,\pc$ or $\fc$. Then, as was noted in Section~\ref{sec:prelim},
the wave operator is a cochain map of the corresponding de~Rham complex
into itself, is invertible [Equation~\eqref{eq:biject}] and cochain
homotopic to zero [Equation~\eqref{eq:dal-def}]. Thus, it induces a map
in cohomology that is both invertible and equal to zero, which can only
mean that all the cohomologies are trivial. More concretely, given any
closed $\alpha \in \Omega^p_X(M)$, the identity $\d (\delta \G_X[\alpha])
= \G_X [(\d\delta + \delta\d) \alpha] = \alpha$ shows that it is also
exact.
\end{proof}

\begin{thm}\label{thm:sc-tc}
We have the isomorphisms
\begin{align}
	H^p_{\SC}(M) &\cong H^{p+1}_c(M), &
		H^p_{\square,\SC} &\cong H^p_c(M) \oplus H^{p+1}_c(M), \\
	H^p_{\tc}(M) &\cong H^{p-1}(M), & \text{and} ~
		H^p_{\square}(M) &\cong H^p(M) \oplus H^{p-1}(M),
\end{align}
with the convention that all cohomologies vanish in degree $p$ for $p<0$
or $p>n$.
\end{thm}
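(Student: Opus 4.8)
The plan is to extract all four isomorphisms from the two Green-hyperbolic exact sequences \eqref{eq:short-sc} and \eqref{eq:short-tc} together with the fact that $\square$ is cochain homotopic to zero. Each of those four-term exact sequences of complexes can be split into two short exact sequences of cochain complexes by introducing the intermediate object $\square\Omega^p_Y(M) = \ker\big(\G\colon \Omega^p_Y(M)\to\cdots\big)$. Concretely, for the spacelike compact case I would write
\begin{gather*}
	0 \to \Omega_0^p(M) \xrightarrow{\square} \Omega_0^p(M) \to \square\Omega_0^p(M) \to 0, \\
	0 \to \square\Omega_0^p(M) \to \Omega_{sc}^p(M) \xrightarrow{\square} \square\Omega_{sc}^p(M) \to 0,
\end{gather*}
and note that by exactness $\square\Omega_{sc}^p(M)=\Omega_{sc}^p(M)$ (the last map in \eqref{eq:short-sc} is onto), while $\G$ identifies $\square\Omega_0^p(M)$ with the image complex $\Omega^p_{\square,sc}(M)=\ker\square\subset\Omega^p_{sc}(M)$. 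Analogously for the timelike compact case, using \eqref{eq:short-tc}, the middle object is $\Omega^p_{\square}(M)=\G[\Omega^p_{tc}(M)]$.

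Next I would feed each short exact sequence into the long exact sequence in cohomology \eqref{eq:long-ex}. The crucial simplification is that in the first short exact sequence above the map induced by $\square$ on $H^p_0(M)$ is zero, because $\square$ is cochain homotopic to zero (via $\delta$, which preserves compact supports). Hence the long exact sequence breaks into short pieces
\begin{equation*}
	0 \to H^p_0(M) \to H^p(\square\Omega_0^\bullet(M)) \to H^{p+1}_0(M) \to 0,
\end{equation*}
so $H^p(\square\Omega_0^\bullet(M))\cong H^p_0(M)\oplus H^{p+1}_0(M)$; since this middle complex is $\Omega^\bullet_{\square,sc}(M)$ under $\G$, this is the claimed formula for $H^p_{\square,sc}$. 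Similarly, the second short exact sequence gives a long exact sequence relating $H^\bullet(\square\Omega_0^\bullet(M))$, $H^\bullet_{sc}(M)$ and $H^\bullet_{sc}(M)$, in which again the connecting-and-induced $\square$ map vanishes up to homotopy, yielding $H^p_{sc}(M)\cong H^{p+1}_0(M)$ after canceling the $H^p_0(M)$ contribution; one checks the degree bookkeeping against $H^\bullet_{\square,sc}$. The timelike compact statements $H^p_{tc}(M)\cong H^{p-1}(M)$ and $H^p_{\square}(M)\cong H^p(M)\oplus H^{p-1}(M)$ come out of \eqref{eq:short-tc} by the identical argument, with $\Omega^p(M)$ and $\Omega^p_0(M)$ interchanged in role.

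The main subtlety I anticipate is not the long-exact-sequence machinery but making the split of the four-term sequence into two short exact sequences genuinely a split \emph{of cochain complexes}: I must verify that the intermediate space $\square\Omega^p_Y(M)$ carries a well-defined de~Rham differential compatible with both maps, which follows because $\square$ and $\G$ commute with $\d$, so $\ker\G$ and $\im\square$ are subcomplexes. A second point requiring care is that ``$\square$ induces zero in cohomology'' must be invoked not on $H^\bullet(M)$ or $H^\bullet_0(M)$ directly but on the cohomology of each complex appearing as the $A^\bullet$ or $C^\bullet$ in \eqref{eq:long-ex}; since the homotopy $\delta={*}\d{*}$ is a differential operator it preserves every one of the support conditions, so $\square$ is homotopic to zero on \emph{all} the relevant complexes, and the corresponding connecting maps in \eqref{eq:long-ex} are forced to be isomorphisms, which is exactly what produces the short split exact sequences above. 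Finally I would record that the extreme-degree conventions ($H^p=0$ for $p<0$ or $p>n$) make the shifted formulas hold without exception at the ends of the complex, and remark that the $H^p_+,H^p_-,H^p_{pc},H^p_{fc}$ vanishing from Theorem~\ref{thm:pf} is consistent with (and in fact re-derivable from) these same exact sequences.
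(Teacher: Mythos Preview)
Your strategy is the same as the paper's: split each Green-hyperbolic four-term exact sequence into two short exact sequences of complexes at the middle, then use that $\square$ is cochain homotopic to zero (via the support-preserving $\delta$) to collapse the resulting long exact sequences in cohomology into short exact sequences of real vector spaces, which split.

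There is, however, a concrete slip in your identification of the ``intermediate object.'' You take it to be $\square\Omega^p_0(M)=\ker\G\subset\Omega^p_0(M)$, the image of the \emph{first} arrow, and then assert that $\G$ identifies it with $\Omega^p_{\square,sc}(M)$. But $\G$ \emph{annihilates} $\square\Omega^p_0$ by definition; these two spaces even have trivial intersection inside $\Omega^p_{sc}$. The correct way to split a four-term exact sequence $0\to A\to B\to C\to D\to 0$ is at $\im(B\to C)=\ker(C\to D)$, which here is $\G[\Omega^p_0]=\Omega^p_{\square,sc}=\ker\square|_{\Omega^p_{sc}}$, a subcomplex of $\Omega^\bullet_{sc}$ rather than of $\Omega^\bullet_0$. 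With that correction your two short exact sequences become precisely the paper's
\[
0 \to \Omega^p_0 \xrightarrow{\square} \Omega^p_0 \xrightarrow{\G} \Omega^p_{\square,sc} \to 0,
\qquad
0 \to \Omega^p_{\square,sc} \hookrightarrow \Omega^p_{sc} \xrightarrow{\square} \Omega^p_{sc} \to 0,
\]
and everything you say afterwards goes through. As written, your first displayed sequence $0\to\Omega^p_0\xrightarrow{\square}\Omega^p_0\to\square\Omega^p_0\to 0$ is not exact (the cokernel of an injection is not its image), and your second has the wrong kernel on the left, so this is worth fixing rather than treating as a harmless relabeling.
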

\begin{proof}
Recall again from Section~\ref{sec:prelim} that both the wave operator
$\square$ and its causal Green function $\G$ commute with $\d$ and hence
constitute cochain maps between the de~Rham complexes with appropriate
supports, inducing maps in cohomology. Moreover, since $\square$ is
cochain homotopic to zero [Equation~\eqref{eq:dal-def}], it induces the
zero map in cohomology.

Let us start with spacelike compact supports. We can break the exact
sequence in~\eqref{eq:short-sc} into two short exact sequences of
complexes:
\begin{equation}
\begin{tikzcd}
	0 \arrow{r} &
	\Omega_c^p(M) \arrow{r}{\square} &
	\Omega_c^p(M) \arrow{r}{\G} &
	\Omega_{\square,\SC}^p(M) \arrow{r} &
	0 ,
\end{tikzcd}
\end{equation}
\begin{equation}
\begin{tikzcd}
	0 \arrow{r} &
	\Omega_{\square,\SC}^p(M) \arrow{r}{\sso} &
	\Omega_{\SC}^p(M) \arrow{r}{\square} &
	\Omega_{\SC}^p(M) \arrow{r} &
	0 .
\end{tikzcd}
\end{equation}
Because $\square$ always induces the zero map, $[\square] = 0$, the
corresponding long exact sequences in cohomology
[cf.~Equation~\eqref{eq:long-ex}] break up into the following short
exact sequences:
\begin{equation}
\begin{tikzcd}
	0 \arrow{r} &
	H^p_c(M) \arrow{r}{[\G]} &
	H^p_{\square,\SC}(M) \arrow{r}{[\d]} &
	H^{p+1}_c(M) \arrow{r} &
	0 ,
\end{tikzcd}
\end{equation}
\begin{equation}
\begin{tikzcd}
	0 \arrow{r} &
	H_{\SC}^{p-1}(M) \arrow{r}{[\d]} &
	H_{\square,\SC}^p(M) \arrow{r}{[{\sso}]} &
	H_{\SC}^p(M) \arrow{r} &
	0 ,
\end{tikzcd}
\end{equation}
again with the convention that any $H^p_X(M)$ vanishes for $p<0$ or
$p>n$. Since we are dealing with real vector spaces, any exact sequence
splits, giving us the isomorphisms
\begin{equation*}
	H^p_c(M) \oplus H^{p+1}_c(M)
	\cong H^p_{\square,\SC}(M) \cong
	H^{p-1}_{\SC}(M) \oplus H^p_{\SC}(M) .
\end{equation*}
Given that $H^0_c(M)$ and $H^{-1}_{\SC}(M)$ both vanish ($M$ is
non-compact and there are no forms in degree $p=-1$), plugging $p=0$
into the above isomorphism implies $H^0_{\SC}(M) \cong H^1_c(M)$.
Proceeding by induction on $p$, we can check that $H^p_{\SC}(M) \cong
H^{p+1}_c(M)$ for all $p$. Thus, we obtain the isomorophisms
\begin{align}
	H^p_{\SC}(M) &\cong H^{p+1}_c(M), \\
	H^p_{\square,\SC}(M) & 
			\cong H^p_c(M) \oplus H^{p+1}_c(M).
\end{align}
Applying the same argument to the exact sequence~\eqref{eq:short-tc}, we
obtain the isomorphisms
\begin{align}
	H^p_{\tc}(M) &\cong H^{p-1}(M), \\
	H^p_{\square}(M) &
			\cong H^p(M) \oplus H^{p-1}(M).
\end{align}
This completes the proof.
\end{proof}

Let $\Sigma\sso M$ be a Cauchy surface. Recall that, by the smooth
Geroch splitting theorem, we can always
smoothly factor $M \cong \R\times \Sigma$. This observation results in
\begin{cor}\label{cor:cauchy}
We have the isomorphisms
\begin{align}
	H^p_{\SC}(M) &\cong H^p_c(\Sigma), &
		H^p_{\square,\SC}(M) &\cong H^p_c(\Sigma) \oplus H^{p-1}_c(\Sigma) \\
	H^p_{\tc}(M) &\cong H^{p-1}(\Sigma), & \text{and} ~
	H^p_{\square}(M) &\cong H^p(\Sigma) \oplus H^{p-1}(\Sigma),
\end{align}
with the convention that all cohomologies vanish in degree $p$ for $p<0$
or $p>n$.
\end{cor}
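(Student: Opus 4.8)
The plan is to combine Theorem~\ref{thm:sc-tc} with the smooth Geroch splitting $M\cong\R\times\Sigma$, reducing the corollary to a purely topological computation of the two ``building block'' cohomologies $H^\bullet(M)$ and $H^\bullet_0(M)$ that appear on the right-hand sides of Theorem~\ref{thm:sc-tc}. For these I would invoke two classical facts about de~Rham theory on a product with $\R$: homotopy invariance for unrestricted supports, and the compactly supported Poincar\'e lemma (integration along the $\R$-fibre) for compact supports.

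For the unrestricted-support part, the projection $\R\times\Sigma\to\Sigma$ is a homotopy equivalence --- an explicit deformation retraction is $(s,(t,x))\mapsto(st,x)$ --- so $H^p(M)\cong H^p(\Sigma)$ in every degree. Substituting into $H^p_{tc}(M)\cong H^{p-1}(M)$ and $H^p_{\square}(M)\cong H^p(M)\oplus H^{p-1}(M)$ from Theorem~\ref{thm:sc-tc} gives the last two isomorphisms of the corollary. For the spacelike-compact part, the key input is $H^k_0(\R\times\Sigma)\cong H^{k-1}_0(\Sigma)$, the compactly supported analogue of the Poincar\'e lemma~\cite{bott-tu}; hence $H^{p+1}_0(M)\cong H^p_0(\Sigma)$ and $H^p_0(M)\cong H^{p-1}_0(\Sigma)$, and plugging these into $H^p_{sc}(M)\cong H^{p+1}_0(M)$ and $H^p_{\square,sc}(M)\cong H^p_0(M)\oplus H^{p+1}_0(M)$ yields the first two isomorphisms. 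One should also check that the degree shifts are consistent with the stated vanishing conventions in the extreme degrees; here it is convenient to keep in mind that $\dim\Sigma=n-1$, which forces some of the groups on the right to vanish automatically.

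I do not anticipate a genuine obstacle: the corollary is essentially a substitution once Theorem~\ref{thm:sc-tc} is in hand. The only point requiring a moment's care is that the Cauchy surface $\Sigma$ need not be compact, nor of finite topological type, so one must make sure the compactly supported Poincar\'e lemma $H^k_0(\R\times\Sigma)\cong H^{k-1}_0(\Sigma)$ is being used in a form valid for an arbitrary smooth manifold $\Sigma$ --- which it is, since the integration-along-the-fibre map is an isomorphism with no finiteness hypothesis on $\Sigma$.
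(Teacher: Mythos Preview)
Your proposal is correct and follows essentially the same route as the paper: combine Theorem~\ref{thm:sc-tc} with the Geroch splitting $M\cong\R\times\Sigma$, then identify $H^p(M)\cong H^p(\Sigma)$ by homotopy invariance and $H^p_0(M)\cong H^{p-1}_0(\Sigma)$ by a standard de~Rham identity. The only minor difference is that the paper phrases the compact-support degree shift as a consequence of Poincar\'e duality, whereas you invoke fibre integration along $\R$ directly; your version is arguably cleaner since it needs no finiteness hypothesis on $\Sigma$, a point you rightly flag.
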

\begin{proof}
The splitting $M \cong \R\times \Sigma$ shows that $M$ is homotopic to
$\Sigma$. Hence, by the homotopy invariance of de~Rham cohomologies with
unrestricted supports, we have the isomorphism $H^p(M) \cong
H^p(\Sigma)$. On the other hand, Poincar\'e duality induces the
isomorphism $H^p_c(M) \cong H^{p-1}_c(\Sigma)$. Therefore, the desired
conclusion follows directly from these identities in combination with
Theorem~\ref{thm:sc-tc}.
\end{proof}

Finally, knowing the respective de~Rham cohomologies with spacelike and
timelike compact supports, we have the following generalization of the
Poincar\'e lemma.
\begin{cor}\label{cor:pairing}
The non-degenerate bilinear pairing between $\Omega^p_{\SC}(M)$ and
$\Omega^{n-p}_{\tc}(M)$ descends to a non-degenerate bilinear pairing
between $H^p_{\SC}(M)$ and $H^{n-p}_{\tc}(M)$. There exists also a
non-degenerate bilinear pairing between $H^p_{\square,\SC}(M)$ and
$H^{n-p}_{\square}(M)$.
\end{cor}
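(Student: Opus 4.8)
The plan is to establish the first pairing by a Stokes' theorem argument and to deduce its non-degeneracy from the isomorphisms of Theorem~\ref{thm:sc-tc} together with classical Poincar\'e duality; the second pairing is then obtained by transporting the first along the direct sum decompositions of Theorem~\ref{thm:sc-tc}. Well-definedness and descent of the first pairing are routine: if $\alpha\in\Omega^p_{sc}(M)$ and $\beta\in\Omega^{n-p}_{tc}(M)$ then $\supp\alpha\cap\supp\beta$ is compact, since a timelike compact set has compact intersection with every spacelike compact set, so $\langle\alpha,\beta\rangle=\int_M\alpha\wedge\beta$ is defined; and if moreover $\d\alpha=\d\beta=0$ and $\alpha=\d\alpha'$ with $\alpha'\in\Omega^{p-1}_{sc}(M)$, then $\alpha'\wedge\beta$ has compact support and $\langle\alpha,\beta\rangle=\int_M\d(\alpha'\wedge\beta)=0$ by Stokes, and symmetrically for an exact second argument. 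Hence $\langle\cdot,\cdot\rangle$ descends to a bilinear pairing $H^p_{sc}(M)\times H^{n-p}_{tc}(M)\to\R$.

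For non-degeneracy I would use the isomorphisms $H^p_{sc}(M)\cong H^{p+1}_0(M)$ and $H^{n-p}_{tc}(M)\cong H^{n-p-1}(M)$ from the proof of Theorem~\ref{thm:sc-tc}, together with classical Poincar\'e duality, which furnishes a non-degenerate pairing $H^{p+1}_0(M)\times H^{n-p-1}(M)\to\R$. The heart of the matter is to show that under these isomorphisms $\langle\cdot,\cdot\rangle$ agrees, up to a sign, with the Poincar\'e pairing. The relevant inputs are two adjointness properties with respect to the integration pairing: $\square$ is formally self-adjoint, $\int_M(\square\alpha)\wedge\beta=\int_M\alpha\wedge(\square\beta)$ whenever the supports make both integrals finite (equivalently, $\G_+$ and $\G_-$ are mutually adjoint), and consequently $\G=\G_+-\G_-$ is formally skew-adjoint, $\int_M\G[\alpha]\wedge\beta=-\int_M\alpha\wedge\G[\beta]$. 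These identities say precisely that the Green-hyperbolic exact sequences~\eqref{eq:short-sc} and~\eqref{eq:short-tc}, paired in complementary degrees, are dual to one another; hence each pair of short exact sequences of complexes into which they split (as in the proof of Theorem~\ref{thm:sc-tc}) is dual as well, the induced connecting homomorphisms in cohomology are mutually adjoint up to sign, and chasing these through the associated long exact sequences identifies the descended pairing with the Poincar\'e pairing and thereby transports its non-degeneracy. One can instead argue via a Cauchy surface: writing $M\cong\R\times\Sigma$ as in Corollary~\ref{cor:cauchy}, the two isomorphisms above are realized by restriction to $\Sigma$ and by fibre integration along $\R$, and the projection formula turns $\langle\cdot,\cdot\rangle$ into ordinary Poincar\'e duality on $\Sigma$.

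For the second assertion it suffices to transport the pairing just obtained. By Theorem~\ref{thm:sc-tc} there are direct sum decompositions $H^p_{\square,sc}(M)\cong H^{p-1}_{sc}(M)\oplus H^p_{sc}(M)$ and $H^{n-p}_{\square}(M)\cong H^{n-p+1}_{tc}(M)\oplus H^{n-p}_{tc}(M)$; pairing the summand $H^{p-1}_{sc}(M)$ with $H^{n-p+1}_{tc}(M)$ and the summand $H^p_{sc}(M)$ with $H^{n-p}_{tc}(M)$ by means of the first pairing, and adding, one gets a bilinear pairing $H^p_{\square,sc}(M)\times H^{n-p}_{\square}(M)\to\R$ which is non-degenerate because each of its two blocks is. A more intrinsic description is available as well: since every $\alpha\in\Omega^p_{\square,sc}(M)$ equals $\G[a]$ for some $a\in\Omega^p_0(M)$ and every $\beta\in\Omega^{n-p}_{\square}(M)$ equals $\G[b]$ for some $b\in\Omega^{n-p}_{tc}(M)$, one may set
\[
	\langle\alpha,\beta\rangle_\square := \int_M a\wedge\beta = -\int_M\alpha\wedge b ;
\]
this is finite, independent of the choices of $a$ and $b$ (the ambiguity lying in $\square\Omega^p_0(M)$, resp.\ $\square\Omega^{n-p}_{tc}(M)$, which is killed using the self-adjointness of $\square$ and $\square\alpha=\square\beta=0$), descends to cohomology by the same Stokes argument, and its non-degeneracy follows by the same reduction to Poincar\'e duality.

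The step I expect to be the genuine work is the compatibility verification in the second paragraph: confirming that the abstract isomorphisms produced by the connecting homomorphisms really do intertwine the concrete integration pairing with Poincar\'e duality, keeping all signs and all choices of cochain representatives under control. Everything else --- the support estimates, the Stokes cancellations, and the adjointness of $\square$ and $\G$ --- is routine.
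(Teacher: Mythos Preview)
Your proposal is correct and follows essentially the same route as the paper: reduce both pairings, via the isomorphisms of Theorem~\ref{thm:sc-tc}, to ordinary Poincar\'e duality between $H^{p+1}_0(M)$ and $H^{n-p-1}(M)$ (and the corresponding direct sums), and verify that the concrete integration pairing is carried to the Poincar\'e pairing by tracing through the connecting maps, using the self-adjointness of $\square$. Your write-up is in fact considerably more explicit than the paper's very terse argument---you spell out the Stokes descent, the skew-adjointness of $\G$, the alternative Cauchy-surface realisation, and the intrinsic formula $\langle\alpha,\beta\rangle_\square=\int_M a\wedge\beta$ for the solution pairing---whereas the paper simply asserts that one should ``trace its effect throughout the proof of Theorem~\ref{thm:sc-tc}'' and invokes the quotient description $\Omega^{n-p}_\square(M)\cong\Omega^{n-p}_{tc}(M)/\square\Omega^{n-p}_{tc}(M)$.
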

\begin{proof}
A consequence of Theorem~\ref{thm:sc-tc} is that $H^p_{\SC}(M) \cong
H^{p+1}_c(M)$ and $H^{n-p}_{\tc}(M) = H^{n-p-1}(M)$. So, the usual
Poincar\'e duality establishes that $H^p_{\SC}(M)^* \cong
H^{n-p}_{\tc}(M)$.  The isomorphism can be exhibited by bilinear pairing,
which descends from the standard bilinear pairing between
$\Omega^p_{\SC}(M)$ and $\Omega^{n-p}_{\tc}(M)$, tracing its effect
throughout the proof of Theorem~\ref{thm:sc-tc}. Its non-degeneracy is
also a consequence of the Poincar\'e lemma applied to $H^p_c(M)$ and
$H^{n-p}(M)$.

It also follows from Theorem~\ref{thm:sc-tc} that $H^p_{\square,\SC}(M)
\cong H^p_c(M) \oplus H^{p+1}_c(M)$ and $H^{n-p}_{\square}(M) \cong
H^{n-p}(M) \oplus H^{n-p-1}(M)$. Again, the usual Poincar\'e duality
establishes the isomorphism $H^p_{\square,\SC}(M)^* \cong
H^{n-p}_{\square}(M)$. The isomorphism can be exhibited by a bilinear
pairing between $\Omega^p_{\square,\SC}(M)$ and
$\Omega^{n-p}_{\square}(M) \cong
\Omega^{n-p}_{\tc}(M)/\square\Omega^{n-p}_{\tc}(M)$, defined by the latter
identity and the self-adjointness of $\square$ with respect to our
pairing between forms. Again, tracing this pairing through the proof of
Theorem~\ref{thm:sc-tc} and appealing to the standard Poincar\'e duality
establishes its non-degeneracy.
\end{proof}

As already discussed in the introduction, the importance of knowing the
above cohomology groups is important for understanding the
(pre)symplectic and Poisson structure of classical field theories, as
emphasized in~\cite{kh-peierls,kh-big,sdh,bds,benini}. The same result
as Corollary~\ref{cor:cauchy} was obtained independently
in~\cite{benini}. As a matter of fact, the method of~\cite{benini} can
be seen as a special case of our homological calculation, as
discussed more explicitly at the end of Section~\ref{sec:conal-cohom}.

\section{Calabi or Killing-Riemann-Bianchi complex}\label{sec:calabi}
In~\cite{kh-peierls,kh-big}, it was pointed out that the construction of
the symplectic and Poisson structures on the phase space of field
theories with constraints and/or gauge invariance can be done using a
general framework, provided a given field theory satisfies certain
geometric conditions. These conditions include the existence of certain
differential complexes that extend the operators that constitute the
constraints and that generate the gauge transformations. For Maxwell
(and similar) theories, all of these complexes are invariably part of
the de~Rham complex~\cite[Secs.4.2--3]{kh-peierls}. On the other hand,
for linearized gravity, one has to use something different.
Unfortunately, the explicit form of these differential complexes is not
currently known for linearized gravity on an arbitrary
background~\cite[Sec.4.4]{kh-peierls}. However, in the special case of
constant curvature backgrounds, the answer is known and it is the
so-called \emph{Calabi complex}~\cite{calabi}. It is likely that, once
an explicit understanding of the corresponding differential complexes
for more general backgrounds is achieved, the general framework
of~\cite{kh-peierls,kh-big} would supersede recent covariant treatments
of the quantization of linearized gravity
like~\cite{fewster-hunt,hack-lingrav}.

The Calabi complex provides a fine resolution~\cite[Sec.II.9]{bredon} of
the sheaf of Killing vectors, similarly to how the de~Rham complex
provides a fine resolution of the sheaf of locally constant functions.
The cohomology of a sheaf (a rather abstract object) is isomorphic to
the cohomology of the complex of global sections of a fine resolution of
the same sheaf (a more concrete object), which is what makes fine
resolutions significant~\cite[Thm.II.4.1]{bredon}. As such, the Calabi
complex has been studied in some literature on the deformation of
constant curvature geometric
structures~\cite{calabi,bbl,goldschmidt-calabi,gasqui-goldschmidt-fr,gasqui-goldschmidt,pommaret,eastwood}.
Because its structure is substantially different from the de~Rham
complex, we summarize some of its relevant properties in
Sections~\ref{sec:calabi-tensor} through~\ref{sec:calabi-coh} before
concentrating on its causally restricted cohomologies in
Section~\ref{sec:calabi-caus}. Many of these properties are scattered
throughout or are simply not available in the existing literature. We
defer a fuller discussion of the Calabi complex, which collects these
properties and their proofs, to~\cite{kh-calabi}. However, all that we
really need for the purposes of Section~\ref{sec:calabi-caus} is the
existence of differential operators listed in
Section~\ref{sec:calabi-ops} and the identities between them. Since
these differential operators are explicitly given, the identities can
in principle be verified by direct calculation.

\subsection{Tensor bundles}
\label{sec:calabi-tensor}
We will present later a differential complex whose nodes are
sections of tensor bundles that are not so easy to express in
conventional notation. So, let us introduce the following short-hands.
We denote the cotangent bundle by $VM = T^*M$ and the bundle of metrics
(symmetric, covariant 2-tensors) by $S^2M = S^2T^*M$. Let $RM \sso
(T^*)^4M$ denote the sub-bundle of covariant 4-tensors that satisfy the
algebraic symmetries of the Riemann tensor ($R_{(ab)cd} = R_{ab(cd)} =
R_{abcd} - R_{cdab} = R_{[abc]d} = 0$). Next, we let $BM \sso (T^*)^5M$
denote the target bundle of the Bianchi operator $\nabla_{[a} R_{bc]de}$.
At this point it is convenient to notice that the fiber of each of these
bundles carries~\cite{fkwc} an irreducible representation of $\GL(n)$,
with $n = \dim M$. In fact, it is easiest to describe the remaining
tensor bundles in terms of the irreducible $\GL(n)$ representation
carried by their fibers. So let $C_lM \sso (T^*)^{l+2}M$ (with $C$
standing for \emph{Calabi}) denote the sub-bundles of covariant
$(l+2)$-tensors with the corresponding irreducible representations
listed in Table~\ref{tbl:bundles}, which also lists their fiber ranks.
It is consistent for us to assign $C_0M \cong VM$, $C_1M \cong S^2M$ and
$C_2M \cong RM$ and $C_3M \cong BM$. Recall that, on an $n$-dimensional
manifold, the largest rank of a fully antisymmetric tensor is $n$. So
the bundles $C_lM$ become trivial (zero fiber rank) for $l>n$.

\begin{table}
\caption{
It is conventional to label irreducible $\GL(n)$ representations by
\emph{Young diagrams}~\cite{fulton}. Recall that a Young diagram with
$k$ cells of type $(r_1,r_2,\ldots)$ consists of a number of rows of
non-increasing lengths $r_i$, $r_{i+1} \le r_i$, such that $\sum_i r_i =
k$. Given a Young diagram with $k$ cells, an instance of the
corresponding irreducible $\GL(n)$ representation class can be realized
as the image of the space of covariant $k$-tensors after two
projections: assign an independent tensor index to each cell of the
diagram, symmetrize over each row, antisymmetrize over each
column.}
\hspace{1em}%
The table below lists the tensor bundles of the Calabi complex, the
corresponding irreducible $\GL(n)$ representations (labeled by Young
diagrams), and their fiber ranks, for $\dim M = n$. The rank is given by
the famous \emph{hook formula}, which is the following fraction. The
numerator is the product of the following numbers: place $n$ in the top
left cell, increase by $1$ to the right and decrease by $1$ down, until
all cells are filled. The denominator is the product of the following
numbers: fill a given cell with the number of cells constituting a hook
with vertex at the given location, extending to the right and
down~\cite{fulton}.
\label{tbl:bundles}
\begin{center}
\begin{tabular}{ccc}
	bundle & Young diagram & fiber rank \\
	\hline \rule[0.5ex]{0pt}{2.5ex} \\[-3ex]
	$VM\cong C_0M$
		& \yd{1} & $n$
		\\ \\[-1ex]
	$S^2M \cong C_1M$
		& \yd{2} & $\frac{n(n+1)}{2}$ \\ \\[-1ex]
	$RM\cong C_2M$
		& \yd{2,2} & $\frac{n^2(n^2-1)}{12}$ \\ \\
	$BM \cong C_3M$
		& \yd{2,2,1} & $\frac{n^2(n^2-1)(n-2)}{24}$ \\ \\[-2ex]
	\hline \rule[0.5ex]{0pt}{2.5ex} \\[-3ex]
	$C_lM$
		& \yt{{1}{},{2}{},{\none[\raisebox{-.5ex}{\vdots}]},{l}}
		& $\frac{n^2(n^2-1)(n-2)\cdots(n-l+1)}{2(l+1)l(l-2)!}$ \\ \\[-2ex]
	\hline \rule[0.5ex]{0pt}{2.5ex}
\end{tabular}
\end{center}
\end{table}

Given two $S^2M$ tensors, we can construct an $RM$ tensor out of them
using the formula
\begin{equation}
	(g \odot h)_{abcd}
	= g_{ac} h_{bd} - g_{bc} h_{ad} - g_{ad} h_{bc} + g_{bd} h_{ac} .
\end{equation}
In fact, the above formula represents a $\GL(n)$-equivariant map between
$S^2\otimes S^2$ and $R$ (where we use the bundle prefixes to stand in
for the corresponding irreducible representations). The decomposition of
the $S^2\otimes S^2$ tensor product has only one copy of $R$, so by
Schur's lemma such a map is unique, up to an overall rescaling. The same
argument can be repeated for the tensor product $S^2\otimes Y$, where
$Y$ corresponds to any other Young diagram. This tensor product
decomposes into irreducible subrepresentations without multiplicities.
Then the projection onto any of the subrepresentations $Y'$ is
well defined up to a rescaling. If we fix sections $g$ of $S^2M$ and
$h$ of $YM$, these projections define a bilinear operation between $g$
and $h$ with the result a section of $Y'M$. We use the following
explicit formulas:
\begin{align}
\notag
	(g\odot t)_{abc:de}
		&=
			+ g_{ad} t_{bc:e} + g_{bd} t_{ca:e} + g_{cd} t_{ab:e} \\
		&\quad {}
			- g_{ae} t_{bc:d} - g_{be} t_{ca:d} - g_{ce} t_{ab:d} , \\
\notag
	(g\odot t)_{abcd:ef}
		&=
			+g_{ae}t_{bcd:f}-g_{be}t_{cda:f}+g_{ce}t_{dab:f}-g_{de}t_{abc:f} \\
		&\quad {}
			-g_{af}t_{bcd:e}+g_{bf}t_{cda:e}-g_{cf}t_{dab:e}+g_{df}t_{abc:e} .
\end{align}
Note that a tensor with indices written as in $t_{abc:de}$ has the
symmetry type $(2,2,1)$, while $t_{abc:d}$ corresponds to the symmetry
type $(2,1,1)$, and so on. The colon $:$ is used purely as a visuall aid
to separate groups of indices belonging to different columsn of a Young
diagram.

The metric $g_{ab}$ itself, an $S^2M$ tensor, can now be used to produce
an $RM$ tensor,
\begin{equation}
	(g\odot g)_{ab:cd} = 2(g_{ac}g_{bd} - g_{bc}g_{ad}) ,
\end{equation}
which is obviously covariantly constant. In fact, a constant curvature
spacetime must have (covariant) Riemann tensor, Ricci tensor and Ricci
scalar of the following form
\begin{equation}\label{eq:bkg-riem}
	\bar{R}_{abcd} = \frac{k}{n(n-1)} (g_{ac}g_{bd} - g_{bc}g_{ad}), \quad
	\bar{R}_{ac} = \frac{k}{n} g_{ac} , \quad
	\bar{R} = k .
\end{equation}
We have decorated these quantities with a bar to indicate the fact that
we shall fix a constant curvature background metric $g$ and consider
perturbations on it. For our purposes, we also require that the
Lorentzian manifold $(M,g)$ is globally hyperbolic.

We should note that solutions of Einstein equations (including a
possible cosmological constant term) with constant curvature includes
Minkowski space ($k=0$), de~Sitter space ($k>0$) and anti-de~Sitter
space ($k<0$). There is (up to isometry) a unique simply connected
version of each of these cases~\cite[\S 5.1--2]{hawking-ellis}.
Other examples may be obtained by taking
quotients thereof with respect to a discrete subgroup, thus changing the
topology. The list of possibilities is thus exhausted by considering
open subsets of such quotients. Some examples will not be globally
hyperbolic (like anti-de~Sitter space or quotients of Minkowski space
with respect to timelike translations) and thus excluded from part of
our discussion.

\subsection{Differential operators}
\label{sec:calabi-ops}
Now, we introduce a number of differential operators between the tensor
bundles that we have defined. For convenience of notation, we denote the
space of sections of a bundle by the same symbol as the bundle itself.
These operators fit into the following diagram:
\begin{equation}
\begin{tikzcd}
	0 \arrow{r} &
	C_0M \arrow{r}{B_1} \arrow{d}{P_{0}} &
	C_1M\arrow{r}{B_2} \arrow{d}{P_{1}} \arrow[dashed]{dl}[swap]{E_1} &
	C_2M \cdots \arrow{r}{B_{n}} \arrow{d}{P_2} \arrow[dashed]{dl}[swap]{E_2} &
	C_{n}M \arrow{r} \arrow{d}{P_{n}}
	\arrow[dashed]{dl}[swap]{E_{n}} &
	0 \\
	0 \arrow{r} &
	C_0M \arrow{r}{B_1} &
	C_1M\arrow{r}{B_2} &
	C_2M \cdots \arrow{r}{B_{n}} &
	C_{n}M \arrow{r} &
	0 \\
\end{tikzcd} .
\end{equation}
All the solid arrows commute and the rows constitute (cochain)
complexes. The vertical maps are then necessarily cochain maps. They
happen to satisfy the identities $P_l = E_{l+1}\circ B_{l+1} + B_l\circ
E_l$, which means that they are null-homotopic, with the $E_l$ supplying
the corresponding cochain homotopy.

Below, we give explicit formulas for all these differential operators in
dimension $n=4$. More details can be found in~\cite{kh-calabi}, which
draws from the earlier
works~\cite{calabi,bbl,goldschmidt-calabi,gasqui-goldschmidt-fr,gasqui-goldschmidt,pommaret,eastwood}.
As we shall see, for low indices they are well known in the relativity
literature. However, the relations between them in terms of fitting into
the above diagram do not seem to have been fully noted.

The \emph{Calabi differential complex} is given by
\begin{align}
	B_1[v]_{ab} &= \nabla_a v_b + \nabla_b v_a , \\
\notag
	B_2[h]_{ab:cd} &= \left(
		\nabla_{(a}\nabla_{c)} h_{bd} -
		\nabla_{(b}\nabla_{c)} h_{ad} -
		\nabla_{(a}\nabla_{d)} h_{bc} +
		\nabla_{(b}\nabla_{d)} h_{ac} \right) \\
		& \qquad {}
			+ k\frac{1}{n(n-1)} (g\odot h)_{ab:cd} , \\
	B_3[r]_{abc:de} &= 3\nabla_{[a} r_{bc]:de}
		= \nabla_a r_{bc:de} + \nabla_b r_{ca:de} + \nabla_c r_{ab:de} , \\
	B_4[b]_{abcd:ef}
		&= 4\nabla_{[a} b_{bcd]:ef} \\
		&= \nabla_a b_{bcd:ef} - \nabla_b b_{cda:ef}
			- \nabla_c b_{dab:ef} - \nabla_d b_{abc:ef} , \\
	B_l[b]_{a_1\cdots a_l:bc}
		&= l\nabla_{[a_1} b_{a_2\cdots a_l]:bc} \quad (l \ge 3) ,
\end{align}
where $(a_1 \cdots a_l)$ and $[a_1 \cdots a_l]$ denote respectively
complete idempotent symmetrization and antisymmetrization of a group of
indices~\cite[Eq.2.4.3--4]{wald}. Recall also that the colon $:$ is used
purely as a visual aid to separate groups of indices belonging to
different columns of the Young diagrams in Table~\ref{tbl:bundles}.
The details showing that these operators have the desired symmetry
properties and indeed define a complex, $B_{l+1}\circ B_l = 0$, which is
moreover elliptic,%
	\footnote{A complex of differential operators is \emph{elliptic} if
	the corresponding complex of symbol maps is exact for every non-zero
	covector.} %
can be found in~\cite{kh-calabi}.

It is interesting to note the following relations with well known
differential operators in relativity. The \emph{Killing} operator is
$K[h] = B_1[h]$. The \emph{linearized Riemann} tensor is $\dot{R}[h] =
-\frac{1}{2} B_2[h] + k \frac{2}{n(n-1)} (g\odot h)$, where the all
covariant non-linear Riemann tensor is expanded as $R[g+\lambda
h]_{ab:cd} = \bar{R}_{ab:cd} + \lambda \dot{R}[h]_{ab:cd}$ (convention
of~\cite{wald}). The background \emph{Bianchi} operator is $\bar{B}[r] =
B_3[r]$, with $\bar{B}[\bar{R}] = 0$. Finally, though the name is not
standard, it is meaningful to call $B_4[b]$ a \emph{higher Bianchi}
operator. Thus, it would also make sense to refer to the Calabi complex
as the \emph{Killing-Riemann-Bianchi complex}. This complex also happens
to be locally exact%
	\footnote{A differential complex on a manifold $M$ is \emph{locally
	exact} if every $x\in M$ has a neighborhood such that the complex
	restricted to it becomes exact. For example, this condition is
	fulfilled for the de~Rham complex thanks to the Poincar\'e lemma.} %
~\cite{calabi,kh-calabi}. Thus, according to the general machinery of
sheaf theory, the Calabi complex provides a fine resolution of the
\emph{sheaf of Killing vectors} (or \emph{Killing sheaf}) $\K_g$ on the
Lorentzian manifold $(M,g)$~\cite[Sec.3]{kh-calabi}. This observation
immediately gains us the following
\begin{prop}[Calabi \cite{calabi}]
The (unrestricted) cohomology $HC^l(M,g) = \ker B_{l+1}/\im B_l$ of the
Calabi complex is isomorphic to the sheaf cohomology $H^\bullet(M,\K_g)$
of the sheaf of Killing vectors on any spacetime $(M,g)$ of constant
curvature.
\end{prop}
Calabi's proof was rather elementary and relied on the specific
structure of this complex. Unfortunately, his method does not generalize
easily to other differential complexes. So, we discuss below a different
method to get local exactness, which relies mostly on the ellipticity of
the Calabi complex, a property which is expected to be shared by other
complexes of interest.

Next, we give explicitly the homotopy differential operators
\begin{align}
	E_1[h]_a
		&= D[h]_a = \nabla^b h_{ab} - \frac{1}{2} \nabla_a h , \\
	E_2[r]_{a:b}
		&= \tr[r]_{a:b} = r_{ac:b}{}^c , \\
\notag
	E_3[b]_{ab:cd}
		&= \nabla^e b_{e ab:cd}
			+ \frac{1}{2}\nabla^e (b_{c ab:d e} - b_{d ab:c e}) \\
\notag
		& \quad {}
			-\frac{1}{2} (\nabla_c b_{ab e:d}{}^e - \nabla_d b_{ab e:c}{}^e) \\
\notag
		& \quad {} - \frac{1}{2}
			(\nabla_a b_{c b e:d}{}^e - \nabla_a b_{d b e:c}{}^e \\
		& \qquad {}
			+\nabla_b b_{a c e:d}{}^e - \nabla_b b_{a d e:c}{}^e) , \\
\notag
	E_4[b]_{abc:de}
		&= \nabla^f b_{f abc:de}
			+ \frac{1}{3}\nabla^f (b_{d abc:e f} - b_{e abc:d f}) \\
\notag
		& \quad {}
			+ \frac{1}{3} (\nabla_d b_{abc f:e}{}^f - \nabla_e b_{abc f:d}{}^f) \\
\notag
		& \quad {} + \frac{1}{6}
			(\nabla_a b_{d bc f:e}{}^f - \nabla_a b_{e bc f:d}{}^f \\
\notag
		& \qquad {}
			+\nabla_b b_{a d c f:e}{}^f - \nabla_b b_{a e c f:d}{}^f \\
		& \qquad {}
			+\nabla_c b_{ab d f:e}{}^f - \nabla_c b_{ab e f:d}{}^f) , \\
\notag
	E_{l+1}[b]_{a_1\cdots a_l:bc}
		&= \nabla^a b_{a a_1\cdots a_l:bc}
			+ l^{-1}\nabla^a (b_{b a_1\cdots a_l:c a} - b_{c a_1\cdots a_l:b a}) \\
\notag
		& \quad {}
			-\frac{(-1)^l}{l}
				(\nabla_b b_{a_1\cdots a_l a:c}{}^a
				-\nabla_c b_{a_1\cdots a_l a:b}{}^a) \\
\label{eq:calabi-homot-end}
		& \quad {}
			-\frac{(-1)^l}{l(l-1)}
				(\nabla_{\{b\}} b_{\{a_1\cdots a_l\}a:c}{}^a
				-\nabla_{\{c\}} b_{\{a_1\cdots a_l\}a:b}{}^a) \\
\notag
		& \quad {}
			\qquad \text{for $(l \ge 2)$, where} \\
\notag
	p_{\{b\}} t_{\{a_1\cdots a_l\}}
		&= \sum_{i=1}^l (-1)^{i+1} p_{a_i} t_{a a_1\cdots \hat{a}_i \cdots a_l}
			\quad \text{($\hat{a}_i$ omitted).}
\end{align}
Their desired Young symmetry properties are demonstrated
in~\cite{kh-calabi}.  Again, we find the following relations with
classical differential operators from relativity. The \emph{de~Donder}
operator is $D[h] = E_1[h]$. The trace from the Riemann to the Ricci
tensors is given by $\bar{R}_{ab} = \bar{R}_{ac:b}{}^{c} =
E_2[\bar{R}]_{ab}$. The higher homotopy operators $E_l$ do not seem to
be part of the classical literature. However, they are essentially
modified divergence operators and are thus reminiscent of the de~Rham
co-differentials.

Finally, the cochain maps $P_l = E_{l+1}\circ B_{l+1} + B_l\circ E_l$
(with the edge cases $P_0 = E_1\circ B_1$ and $P_n = B_n \circ E_n$) are
given by
\begin{align}
	P_0[v]_a
		&= \square v_a + k\frac{1}{n} v_a , \\
	P_1[h]_{ab}
		&= \square h_{ab} 
			- k \frac{2}{n(n-1)} h_{ab}
			+ 2k \frac{g_{ab} \tr[h]}{n(n-1)}  , \\
	P_2[r]_{ab:cd} 
		&= \square r_{ab:cd}
			- k\frac{2}{n} r_{ab:cd}
			+ 2k\frac{(g\odot \tr[r])_{ab:cd}}{n(n-1)}  , \\
	P_3[b]_{abc:de}
		&= \square b_{abc:de}
			- k\frac{(3n-7)}{n(n-1)} b_{abc:de}
			- 2k\frac{(g\odot \tr[b])_{abc:de}}{n(n-1)}  , \\
	P_4[b]_{abcd:ef}
		&= \square b_{abcd:ef}
			- k\frac{(4n-14)}{n(n-1)} b_{abcd:ef}
			+ 2k\frac{(g\odot \tr[b])_{abcd:ef}}{n(n-1)}  , \\
\notag
	P_l[b]_{a_1\cdots a_l:bc}
		&= \square b_{a_1\cdots a_l:bc}
			- k\frac{(ln-l^2+2)}{n(n-1)} b_{a_1\cdots a_l:bc} \\
		& \qquad {}
			+ (-)^l 2k \frac{(g\odot \tr[b])_{a_1\cdots a_l:bc}}{n(n-1)}
			\quad (l\ge 3) .
\end{align}
where we have defined the traces as $\tr[h] = h_e{}^e$,
$\tr[r]_{ab} = r_{a e: b}{}^e$, $\tr[b]_{ab:c} = b_{ab e: c}{}^e$,
$\tr[b]_{abc:d} = b_{abc e: d}{}^e$, and $\tr[b]_{a_1\cdots a_l:b} =
b_{a_1\cdots a_l a:b}{}^a$. The required null-homotopy identities $P_l =
E_{l+1}\circ B_{l+1} + B_l\circ E_l$ (including the edge cases $P_0 =
E_1\circ B_1$ and $P_n = B_n \circ E_n$) are demonstrated
in~\cite{kh-calabi}. These identities for $P_0[v]$ and $P_1[v]$ are well
known and are tightly linked with the de~Donder gauge fixing condition
in linearized gravity~\cite{wald,fewster-hunt}. The higher cochain maps
and the corresponding identities appear to be new.  Though, the identity
for $P_2[r]$ is related to the non-linear wave equations satisfied by
the Riemann and Weyl tensors on any vacuum background, sometimes known
as the \emph{Lichnerowicz Laplacian}~\cite[Sec.1.3]{lichnerowicz} (see
also~\cite[Sec.7.1]{chr-kl}, \cite[Exr.15.2]{mtw}, \cite[Eq.35]{bcjr}).

\subsection{Cohomology with unrestricted and compact supports}
\label{sec:calabi-coh}
Let us denote the cohomology of the Calabi complex by $HC^l_X(M,g)$,
where $X=c,+,-,\fc,\pc,\SC,\tc$ or empty, according to the conventions of
Section~\ref{sec:prelim}. As in the case of the de~Rham complex in
Section~\ref{sec:comp}, we will later relate the cohomology with
causally restricted supports to that with unrestricted or compact
supports. It remains still to find a means to calculate these cohomology
groups. We will state some results in that direction below, referring
to~\cite{kh-calabi} for a fuller discussion.

An important observation is that each of the $P_l$ operators is
wave-like, that is, it has the same principal symbol as the wave
operator $\square_g$ with respect to the background Lorentzian metric
$g$. This observation has a dual role. First, this means that each of
the $P_l$ operators is Green hyperbolic~\cite{bgp,baer}, while being
cochain homotopic to zero, opening the door to using the methods of
Section~\ref{sec:comp} to compute the cohomology with causally
restricted supports.

The second role is more subtle:
\begin{rem}\label{rem:elliptic}
Note that the principal symbols of the $B_l$ maps in the Calabi complex
are actually $\GL(n)$-equivariant and so do not actually involve the
background metric $g$. On the other hand, the principal symbols of the
cochain maps $P_l$ do depend on $g$. This dependence comes purely from
the cochain homotopy operators $E_l = E^g_l$ and the identity $P_l =
P_l^g = E^g_{l+1}\circ B_{l+1} + B_l\circ E_l^g$, where we have used the
subscript $g$ to indicate that the background metric was used for
covariant differentiation and index raising. On the other hand, we are
completely free to define a different set of cochain maps $P_l^{g_R} =
E^{g_R}_{l+1}\circ B_{l+1} + B_l\circ E_l^{g_R}$, which now depend on a
different metric $g_R$ with Riemannian signature. It is crucial to note
that the principal symbol of $P^{g_R}_l$ depends only on the principal
symbols of the $E_l^{g_R}$ and $B_l$. So, in fact, it is equal to the
principal symbol of $P^g_l$, but with the Lorentzian metric $g$ replaced
by the Riemannian metric $g_R$.  In other words, each of the $P_l^{g_R}$
operators is elliptic, since its principal symbol coincides with the
Laplace operator $\Delta_{g_R}$. Of course, $P_l^{g_R}$ would differ
much more radically from the formulas we have given for $P_l^g$ in the
terms of subleading differential orders.
\end{rem}

The ellipticity of the complex (together with a subtler property known
as a $\delta$-estimate, discussed in more detail
in~\cite{tarkhanov,kh-calabi}) results in the following
\begin{prop}
Let us denote by $\Secs(C_lM)$ the space of smooth sections of the
tensor bundle $C_lM\to M$.
(a) The cohomology $HC^\bullet(M,g)$ of the Calabi complex
$(\Secs(C_lM),B_l)$ is isomorphic to the cohomology $H^\bullet(M,\K_g)$
of the sheaf $\K_g$ of Killing vectors on $(M,g)$. (b) If $(M,g)$ is a
simply connected, constant curvature Lorentzian manifold, then
$H^\bullet(M,\K_g) \cong H^\bullet(M)\otimes V_g$, where $V_g$ is the
vector space of all Killing vectors and $H^\bullet(M)$ is the de~Rham
cohomology group.
\end{prop}
Killing vectors (or rather covectors in our notation) are solutions
$v\in \Secs(T^*M)$ of the Killing equation $K[v]_{ab} = \nabla_a v_b +
\nabla_b v_a = 0$. On simply connected, constant curvature
$n$-dimensional spacetimes, $\dim V_g = {n+1\choose 2}$. Note also that
the simple connectedness condition implies that $H^1(M) = 0$. The
precise definition of a sheaf and its cohomology is not of particular
importance for the moment. For present purposes, it suffices that the
above result, at the very least, answers the question of what
$HC^\bullet(M,g)$ is for the simply connected versions of Minkowski
($\R^n$), de~Sitter ($\R\times S^{n-1}$ for $n\ge 3$, $\R^2$ for $n=2$)
and anti-de~Sitter ($\R^n$) spacetimes. The proof, together with a
partial discussion of the non-simply connected case, can be found
in~\cite{kh-calabi}.

It remains to discuss Calabi cohomology with compact supports
$HC_c^\bullet(M,g)$. First, we note that the chain complex
$(\Secs(C^*_lM),B^*_l)$ formally adjoint to the Calabi complex has the
interesting property that equation $B^*_n[b] = 0$ is equivalent to the
\emph{(rank-$(n-2)$) Killing-Yano} equation $Y[w]_{abc_4\cdots c_n} =
\nabla_{(a} w_{b)c_4\cdots c_n}$, where a solution with $w_{[bc_4\cdots
c_n]} = w_{bc_4\cdots c_n}$ is called a \emph{(rank-$(n-2)$)
Killing-Yano} tensor on $(M,g)$. We define \emph{Calabi homology}
$HC_l(M,g)$ as the cohomology of this adjoint complex
$(\Secs_c(C_l^*M),B_l^*)$ with compact supports and also \emph{locally
finite Calabi homology} as the cohomology of the adjoint complex
$(\Secs(C_l^*M),B^*_l)$ with unrestricte supports. Since taking formal
adjoints preserves the homotopy identities and ellipticity, appealing to
the same arguments as above (again, including a
$\delta$-estimate~\cite{tarkhanov,kh-calabi}) we also have
\begin{prop}
(a) Locally finite Calabi homology $HC^{\lf}_l(M,g)$ is isomorphic to the
cohomology $H^\bullet(M,\KY_g)$ of the sheaf $\KY_g$ of Killing-Yano
tensors on $(M,g)$. (b) If $(M,g)$ is a simply connected, constant
curvature Lorentzian manifold, then $H^\bullet(M,\KY_g) \cong
H^\bullet(M)\otimes W_g$, where $W_g$ is the vector space of all
Killing-Yano tensors and $H^\bullet(M)$ is the de~Rham cohomology group.
\end{prop}
On simply connected, constant curvature $n$-dimensional spacetimes,
$\dim W_g = {n+1\choose 2}$~\cite{stepanov}. Furthermore, using
Remark~\ref{rem:elliptic} and some general results from the theory of
elliptic differential complexes (see Example~5.1.11 of~\cite{tarkhanov},
which relies on the results of~\cite{serre}), we have the following
generalized Poincar\'e duality isomorphisms~\cite{kh-calabi}:
\begin{prop}
When finite dimensional, Calabi homology is the linear dual of Calabi
cohomology, $HC_l(M,g) = HC^l(M,g)^*$, while Calabi cohomology with
compact supports is the linear dual of locally finite Calabi
homology, $HC_c^l(M,g) = HC^{\lf}_l(M,g)^*$. In both cases, the duality
can be exhibited via the non-degeneracy of the pairing descended from
the natural pairing between the chains and cochains of corresponding
complexes.
\end{prop}

\subsection{Cohomology with causally restricted supports}
\label{sec:calabi-caus}
Recall that, in Section~\ref{sec:prelim}, we defined de~Rham
cohomologies $H^p_X(M)$ with causally restricted supports $X = +, -,
\SC, \tc, \pc$ or $\fc$ by restricting the de~Rham complex to forms with
supports indicated by $X$, with the on-shell cohomologies
$H^p_\square(M)$ and $H^p_{\square,\SC}(M)$. Substituting the Calabi
complex for the de~Rham complex and the $P_l$ operators for the
d'Alembertians $\square$, by direct analogy we can define the causally
restricted Calabi cohomologies $HC^l_X(M,g)$, $HC^l_P(M,g)$ and
$HC^l_{P,\SC}(M,g)$. We can use the same definitions also in the case of
the adjoint Calabi complex, with slightly altered notation. Let the
causally restricted Calabi homology $HC^X_l(M,g)$ be the cohomology of
the complex $(\Secs_Y(C_l^*M),B_l^*)$ where the pair $(X,Y)$ is one of
\emph{retarded} $(+,\fc)$, \emph{advanced} $(-,\pc)$ , \emph{spacelike
locally finite} $(\slf,\tc)$, \emph{timelike locally finite}
$(\tlf,\SC)$ , \emph{future locally finite} $(\flf,-)$  and \emph{past
locally finite} $(\plf,+)$. Similarly, we define the on-shell Calabi
homologies $HC^{P,\lf}_l(M,g)$ and $HC^{P,\tlf}(M,g)$ as the cohomologies of
the complexes $(\ker P^*_l\cap \Secs(C_l^*M), B_l^*)$ and $(\ker
P^*_l\cap \Secs_{\SC}(C_l^*M), B_l^*)$, respectively. The above $(X,Y)$
pairs are chosen specifically so that there is a bilinear pairing
between Calabi homology $HC^X_l(M,g)$ and Calabi cohomology
$HC_Y^l(M,g)$, which descends from the natural pairing between the
corresponding spaces of sections of $C^*_lM$ and $C_lM$.

The specific way in which these causally restricted cohomologies are of
importance in linearized gravity is summarized in the following
proposition. For definiteness of notation let us fix a $\chi \in
C^\oo(M)$ that is $1$ in the future of a Cauchy surface $\Sigma_+$ and
$0$ in the past of another Cauchy surface $\Sigma_-$. The following is a
special case of the general result~\cite[Thm.3.2]{kh-peierls}.
\begin{prop}\label{prp:lingrav}
Linearized gravity on a constant curvature
background~\cite[Sec.4.4]{kh-peierls} induces a symplectic form on
$\Secs_{P,\SC}(C_1M)$~\cite[Def.3.10]{kh-peierls} that is non-degenerate
when (a) the bilinear form on $HC^1_{\SC}(M,g) \times HC_1(M,g)$ induced
by $\langle \alpha, \beta \rangle = \int_M \alpha \cdot \beta$ is
non-degenerate and (b) the bilinear form on $HC^1_{P,\SC}(M)$ induced
by $\langle \alpha, \beta \rangle_P = \int_M \alpha \cdot P_1 [\chi
\beta]$ is non-degenerate.
\end{prop}
From the proof of that proposition it also follows that degeneracies in
(a) and (b) can imply degeneracies in the corresponding (pre-)symplectic
structure.

With the above discussion in mind, we can see immediately that we are in
a situation very similar to that of Section~\ref{sec:comp}, with the
de~Rham complex replaced by the Calabi complex (or its adjoint complex)
and the wave operators $\square$ replaced by the operators $P_l$ (or
$P^*_l$), which have wave-like principal symbols and are Green
hyperbolic. So, repeating the arguments of Section~\ref{sec:comp}, we
immediately have the following
\begin{thm}\label{thm:calabi}
Consider a globally hyperbolic, constant curvature Lorentzian manifold
$(M,g)$. The Calabi cohomology $HC^l_X(M,g)$ with the causally restricted
supports $X=+,-,\pc$ or $\fc$ is trivial. Moreover, for the cases
$X=\SC,\tc$, we have the isomorphisms
\begin{align}
	HC^l_{\SC}(M,g) &{\cong} HC^{l+1}_c(M,g), &
		HC^l_{P,\SC}(M,g) &{\cong} HC^l_c(M,g) {\oplus} HC^{l+1}_c(M,g), \\
	HC^l_{\tc}(M,g) &{\cong} HC^{l-1}(M,g), &
		HC^l_{P}(M,g) &{\cong} HC^l(M,g) {\oplus} HC^{l-1}(M,g),
\end{align}
with the convention that all cohomologies vanish in degree $l$ for $l<0$
or $l>n$. Similarly, the Calabi homology $HC_l^X(M,g)$ with the causally
restricted supports $X=+,-,\plf$ or $\flf$ is trivial. Moreover, for the
cases $X=\tlf,\slf$, we have the isomorphisms
\begin{align}
	HC_l^{\tlf}(M,g) &{\cong} HC_{l-1}(M,g), &
		\hspace{5.5em}%
		\llap{$HC^l_{P,\tlf}(M,g)$} &{\cong} HC_l(M,g) {\oplus} HC_{l-1}(M,g), \\
	HC_l^{\slf}(M,g) &{\cong} HC^{\lf}_{l+1}(M,g), &
		\llap{$HC^l_{P,\lf}(M,g)$} &{\cong} HC^{\lf}_l(M,g) {\oplus} HC^{\lf}_{l+1}(M,g),
\end{align}
again with the convention that all cohomologies vanish in degree $l$ for
$l<0$ or $l>n$.
\end{thm}
The Calabi cohomology with spacelike compact support in degree $l=1$ is
important in understanding the symplectic and Poisson structure of the
classical field theory (and of course the quantization) of linearized
gravitons on a background of constant curvature. This was pointed out
explicitly in~\cite[Sec.4.4]{kh-peierls} as a special case of a more
general phenomenon (also discussed in~\cite{kh-big}).

\begin{rem}
Using the above theorem and the results of Section~\ref{sec:calabi-coh},
we can assert that for $n$-dimensional Minkowski space $HC^l_{\SC}$
vanishes in all degrees except $l=n-1$, while $HC^l_{P,\SC}$ vanishes in
all degrees except $l=n,n-1$. For $n$-dimensional de~Sitter space
$HC^l_{\SC}$ vanishes in all degrees except $l=n-1$, while $HC^l_{P,\SC}$
vanish in all degrees except $l=0,n-1,n$. Similar remarks apply to
Calabi homologies.
\end{rem}

\section{Notes and generalizations}\label{sec:notes}

\subsection{Generalized causal structures}\label{sec:gen-caus}
The notion of a causal structure on a manifold or even a topological
space (in the sense of a partial order on events) can be generalized
quite fare beyond the context of Lorentzian geometry~\cite{kp,gps}. We
will stick with the context of differential geometry, where a natural
generalization consists of introducing at every point of a manifold an
arbitrary convex cone in the tangent%
	\footnote{One could equally do so in the cotangent bundle, and
	produce a tangent cone by convex (or polar) duality.} %
bundle. Such a manifold could be called a \emph{conal
manifold}~\cite{neeb,lawson,sullivan,kh-big}. Various notions generated
by the causal structure on Lorentzian manifolds survive almost without
modification on conal manifolds, including spacelike and timelike
compactness. The main question we will try to answer in this section is
the following: Is it possible to use the methods of
Section~\ref{sec:comp} to compute causally restricted cohomologies on a
conal manifold? We shall see that the answer is \emph{yes}, even if the
conal manifold is not Lorentzian.

\subsubsection{Conal manifolds}\label{sec:conal}
Before dealing with spacelike and timelike compactly supported forms,
let us introduce the basics of conal manifolds and causal relations on
them. Let $M$ be a smooth manifold and $C\subset TM$ be an open subset,
such that $C_x = C \cap T_x M$ is an open, convex cone in $T_x M$ that
does not contain any affine line. It can be shown that the interior
$C^\oast_x$ of the polar dual (or convex dual) cone $T^*_x M \supset C^*_x
= \{ p \in T^*_xM \mid \forall v\in C_x \colon p\cdot v \ge 0 \}$
satisfies the same conditions, with $C^\oast = \sqcup_{x\in M}
C^\oast_x$. The pair $(M,C)$ or $(M,C^\oast)$ is called a \emph{conal
manifold}, with $C$ (or $C^\oast$) called the tangent (or cotangent)
\emph{cone distribution} or \emph{cone bundle}. For example, the subset
of non-vanishing, future-pointing, timelike vectors on a Lorentzian
manifold with a time orientation satisfies the above conditions. In
general, the cones $C_x$ need not even have elliptic cross sections,
thus not be associated to any Lorentzian metric. The cones of future
pointing timelike vectors of linear symmetric hyperbolic PDE systems
also satisfy the same properties~\cite[Sec.4.1]{kh-big}. Sometimes, it
is also convenient to admit degenerate cases where the cones are not
open or contain some affine lines, but some special care must be taken
in those situations.

Given a conal manifold $(M,C)$ we can define a \emph{chronological
order} relation on the points of $M$. Namely, $x \ll y$ if there exists
a smooth curve $\gamma\colon [0,1] \to M$, such that $\gamma(0) = x$,
$\gamma(1) = y$ and $\dot{\gamma}(t) \in C$ for all $t\in [0,1]$. It can
be shown that the chronological order relation $I^+\sso M\times M$ is
open and transitive. We can also define the \emph{reverse chronological
order}, $I^-$, and \emph{chronological influence}, $I = I^+\cup I^-$,
relations in the obvious way. We avoid defining the analog of the
\emph{causal} order relation usually denoted by $J^+$, simply because we
have not made any hypotheses about the regularity of the set of causal
vectors ($\overline{C}_x \sso T_xM$). Given any set $K\sse M$, we denote
by $I^\pm(K)$ the set of all points of $M$ that respectively
chronologically precede ore are preceded by the points of $K$. In
general, $I^\pm(K)$ is not closed, even if $K$ is. So, for convenience
we define $\overline{I}^\pm(K) = \overline{I^\pm(K)}$. We also use the
notation $I(K) = I^+(K) \cup I^-(K)$ and $\overline{I}(K) =
\overline{I}^+(K) \cup \overline{I}^-(K)$. Note that $\overline{I}^\pm
\sse M\times M$ need not be transitive as relations.

The definition of a Cauchy surface $\Sigma \subset M$ is the usual one,
every inextensible smooth curve with timelike tangents must intersect
$\Sigma$ exactly once. It has recently been shown that the smooth
version of the Geroch splitting theorem~\cite{geroch,bs1,bs2}
generalizes to conal manifolds~\cite{fs}. So, \emph{globally
hyperbolicity} can be simply characterized by the existence of a Cauchy
surface. Also, the results of~\cite{sanders} should also directly carry
over to conal manifolds. Finally, we define the notions of
\emph{advanced}, \emph{retarded}, \emph{spacelike compact},
\emph{timelike compact}, \emph{future compact} and \emph{past compact}
exactly in the same way as in Section~\ref{sec:prelim}, with the
exception that we use the relations $\bar{I}^\pm$ and $\bar{I}$ instead
of the relations $J^\pm$ and $J$.%
	\footnote{We are not concerned with possible minor inconsistencies
	this substitution introduces in the case of Lorentzian manifolds with
	ill-behaved causal structures. In any case, we shall only apply these
	notions for globally hyperbolic spacetimes, where these differences do
	not appear.}

\subsubsection{Cohomology with causally restricted supports}\label{sec:conal-cohom}
Let $M$ be a globally hyperbolic conal manifold and $g$ an auxiliary
globally hyperbolic Lorentzian metric that induces another conal
structure on $M$ that is ``slower'' than the original one. That is,
$\Omega^p_{\pm_g}(M) \sse \Omega^p_{\pm}(M)$, which also implies that
$\Omega^p_{\SC_g}(M) \sse \Omega^p_{\SC}(M)$, while
$\Omega^p_{\fc_g,\pc_g}(M) \supseteq \Omega^p_{\fc,\pc}(M)$, and hence
$\Omega^p_{\tc_g}(M) \supseteq \Omega^p_{\tc}(M)$. Any conal manifold
admits a nowhere vanishing vector field (contract each cone to a ray and
select a vector from it), which is moreover everywhere future directed.
So, the existence of such an auxiliary Lorentzian metric follows from
the same known, general arguments that show the existence of Lorentzian
metrics on manifolds of vanishing Euler characteristic (i.e., admitting
a nowhere vanishing vector field)~\cite{beem-ehrlich,oneill}. The
``slowness'' requirement is implemented by making sure that the
Lorentzian timelike cones closely hug the directions singled out by the
above everywhere timelike vector field.

Let $\G_\pm$ denote once again the advanced and retarded Green functions
of the wave operator $\square_g$ defined with respect to $g$. Then it is
easy to see that the Green functions are still well defined and
injective as maps $\G_\pm \colon \Omega^p_c(M) \to \Omega^p_\pm(M)$.
Appealing to the same logic as in the standard
proofs%
	\footnote{Pick an exaustion of $M$ by compact sets and adapt a
	sequence of smooth ``step functions'' to this exaustion. Precomposing
	$\G_\pm$ with multiplication by these step functions gives a sequence
	of operators which converges to an operator with the desired extended
	domain.}%
~\cite{bgp,ginoux,kh-peierls,kh-big,baer}, we can extend the Green
functions to bijective maps $\G_\pm\colon \Omega^p_\pm(M) \to
\Omega^p_\pm(M)$ and $\G_\pm\colon \Omega^p_{\fc,\pc}(M) \to
\Omega^p_{\fc,\pc}(M)$, from which it is straightforward to establish
exactness of the following sequences, with $\G = \G_+ - \G_-$:
\begin{equation}\label{eq:short-sc-conal}
\begin{tikzcd}
	0 \arrow{r} &
	\Omega_{0}^p(M) \arrow{r}{\square} &
	\Omega_{0}^p(M) \arrow{r}{\G} &
	\Omega_{\SC}^p(M) \arrow{r}{\square} &
	\Omega_{\SC}^p(M) \arrow{r} &
	0 ,
\end{tikzcd}
\end{equation}
\begin{equation}\label{eq:short-tc-conal}
\begin{tikzcd}
	0 \arrow{r} &
	\Omega_{\tc}^p(M) \arrow{r}{\square} &
	\Omega_{\tc}^p(M) \arrow{r}{\G} &
	\Omega^p(M) \arrow{r}{\square} &
	\Omega^p(M) \arrow{r} &
	0 ,
\end{tikzcd}
\end{equation}
where the supports are restricted by the given conal structure on $M$
and not by that induced by the auxiliary Lorentzian metric $g$. Note
that the proofs would make use of the hypothesis that the given conal
structure is globally hyperbolic, specifically in the construction of
explicit splitting maps that demonstrate
exactness~\cite[Lem.2.1]{kh-peierls}. Thus, repeating the arguments
Section~\ref{sec:comp}, we establish the following generalization of
Theorems~\ref{thm:pf} and~\ref{thm:sc-tc}.
\begin{thm}\label{thm:conal}
Consider a globally hyperbolic conal manifold $M$. Its de~Rham
cohomology $H^p_X(M)$ with causally restricted supports $X=+,-,\pc$ or
$\fc$ is trivial. Moreover, we have the isomorphisms
\begin{align}
	H^p_{\SC}(M) &\cong H^{p+1}_c(M), &
		H^p_{\square,\SC} &\cong H^p_c(M) \oplus H^{p+1}_c(M), \\
	H^p_{\tc}(M) &\cong H^{p-1}(M), & \text{and} ~
		H^p_{\square}(M) &\cong H^p(M) \oplus H^{p-1}(M),
\end{align}
with the convention that all cohomologies vanish in degree $p$ for $p<0$
or $p>n$.
\end{thm}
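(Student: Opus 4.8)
The plan is to transcribe the homological arguments of Section~\ref{sec:comp} essentially verbatim, with the Lorentzian wave operator $\square$ replaced by the auxiliary wave operator $\square_g = \d\delta_g + \delta_g\d$ built from the auxiliary metric $g$ (with $\delta_g$ its co-differential), and with the Green-hyperbolic exact sequences~\eqref{eq:short-sc} and~\eqref{eq:short-tc} replaced by their conal analogues~\eqref{eq:short-sc-conal} and~\eqref{eq:short-tc-conal}. The three structural facts that made the earlier proofs work all persist. First, since differential operators preserve supports, both $\d$ and $\square_g$ (and likewise $\delta_g$ and the Green functions $G_\pm$, $\G = G_+ - G_-$) restrict to each space $\Omega^p_X(M)$ of forms with conally restricted support; thus each $(\Omega^\bullet_X(M),\d)$ is a cochain complex, $\square_g$ is a cochain map of it, and $\G$ is a cochain map between such complexes, because $\square_g$ commutes with $\d$ by~\eqref{eq:dal-def} and hence so do its Green functions. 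Second, by its very definition $\square_g = \d\delta_g + \delta_g\d$ is cochain homotopic to zero on every $(\Omega^\bullet_X(M),\d)$, with homotopy $\delta_g$, so it induces the zero map on every $H^p_X(M)$. Third, the exact sequences~\eqref{eq:short-sc-conal} and~\eqref{eq:short-tc-conal} are available, by the discussion preceding the theorem.

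For $X = +,-,pc$ or $fc$ I would argue exactly as in Theorem~\ref{thm:pf}: the extension of $G_\pm$ to bijections $\Omega^p_\pm(M) \to \Omega^p_\pm(M)$ and $\Omega^p_{fc,pc}(M) \to \Omega^p_{fc,pc}(M)$ makes $\square_g$ an invertible cochain map of $(\Omega^\bullet_X(M),\d)$ that is homotopic to zero, so this complex is homotopy equivalent to the trivial complex and $H^p_X(M) = 0$.

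For $X = sc$ or $tc$ I would split each of the four-term exact sequences~\eqref{eq:short-sc-conal}, \eqref{eq:short-tc-conal} into two short exact sequences of cochain complexes, inserting the solution complex $\Omega^\bullet_{\square,sc}(M) = \ker\square_g \cap \Omega^\bullet_{sc}(M) = \G[\Omega^\bullet_0(M)]$ (resp.\ $\Omega^\bullet_\square(M) = \ker\square_g\cap\Omega^\bullet(M) = \G[\Omega^\bullet_{tc}(M)]$), exactly as in the proof of Theorem~\ref{thm:sc-tc}. Passing to the long exact sequences in cohomology~\eqref{eq:long-ex} and using that $\square_g$ induces the zero map, each long sequence collapses into short exact sequences
\begin{equation}
	0 \to H^p_0(M) \xrightarrow{\G} H^p_{\square,sc}(M) \xrightarrow{\d} H^{p+1}_0(M) \to 0, \qquad
	0 \to H^{p-1}_{sc}(M) \xrightarrow{\d} H^p_{\square,sc}(M) \xrightarrow{\sso} H^p_{sc}(M) \to 0,
\end{equation}
together with their evident $tc$-counterparts. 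Since these are short exact sequences of $\R$-vector spaces they split, and comparing the two (as in Section~\ref{sec:comp}) yields $H^p_{sc}(M) \cong H^{p+1}_0(M)$, $H^p_{\square,sc}(M) \cong H^p_0(M) \oplus H^{p+1}_0(M)$, and likewise $H^p_{tc}(M) \cong H^{p-1}(M)$, $H^p_\square(M) \cong H^p(M)\oplus H^{p-1}(M)$, with the stated degree conventions.

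The point demanding genuine work — and the one I expect to be the main obstacle — is not this final bookkeeping, which is a word-for-word copy of Section~\ref{sec:comp}, but the justification of the conal exact sequences~\eqref{eq:short-sc-conal} and~\eqref{eq:short-tc-conal} themselves. Concretely, one must check that the advanced and retarded Green functions of $\square_g$ genuinely extend to bijections on the \emph{conally} (rather than $g$-) restricted support classes, and that the explicit splitting maps witnessing exactness can be constructed from global hyperbolicity of the \emph{given} conal structure — via the conal Geroch splitting of~\cite{fs} and the splitting-map construction of~\cite[Lem.2.1]{kh-peierls} — rather than from that of the auxiliary metric. It is precisely the ``slowness'' inclusions $\Omega^p_{sc_g}(M) \sse \Omega^p_{sc}(M)$ and $\Omega^p_{tc}(M) \sse \Omega^p_{tc_g}(M)$ that guarantee the images of $\G$ land in the conal support classes appearing in the sequences. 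Once~\eqref{eq:short-sc-conal} and~\eqref{eq:short-tc-conal} are established, Theorem~\ref{thm:conal} follows by the arguments just sketched.
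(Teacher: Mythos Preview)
Your proposal is correct and follows essentially the same approach as the paper: the paper's proof of Theorem~\ref{thm:conal} consists precisely of invoking the auxiliary slower Lorentzian metric $g$, establishing the conal exact sequences~\eqref{eq:short-sc-conal}--\eqref{eq:short-tc-conal} via the splitting-map construction of~\cite[Lem.2.1]{kh-peierls}, and then declaring that the arguments of Section~\ref{sec:comp} (Theorems~\ref{thm:pf} and~\ref{thm:sc-tc}) carry over verbatim. You have reproduced exactly this structure, including the correct identification of where the real work lies.
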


It should be clear from the preceding discussion that there is nothing
inherently special in our use of the d'Alembertian $\square_g$, when it
comes to the calculation of de~Rham cohomologies with causally
restricted supports on a globally hyperbolic conal manifold $M$. It is
merely one of multiple possible auxiliary hyperbolic differential
operators that can serve the same purpose. Here are the key required
properties for such an operator $h$: (a) $h$ must be a cochain map that is
homotopic to zero with respect to the de~Rham complex, (b) it must possess
retarded and advanced Green functions, (c) these Green functions must be
causal with respect to the given conal structure on $M$. In fact, the
conclusion of our Theorem~\ref{thm:sc-tc} was reached independently in
the recent paper~\cite{benini} by following an argument structurally
similar to ours, with the d'Alembertian replaced by the Lie derivative
$\Lie_v$ with respect to a complete timelike vector field $v$. It is
clearly (Green) hyperbolic~\cite{bgp,baer,kh-big,kh-peierls} with Green
functions simply given by integration (into the future or past) along
the flow lines of $v$. Moreover, it is cochain homotopic to zero because
of the well known magic formula of Cartan: $\Lie_v = \iota_v \d + \d
\iota_v$.

\subsection{Functoriality}\label{sec:funct}
Recall that ordinary de~Rham cohomology is defined on any finite
dimensional manifold and the pullback of differential forms along a
smooth map between manifolds induces a map between their cohomologies
(in the direction opposite the original smooth map). This observation
has the following well-known formalisation: de~Rham cohomology in degree
$p$, $H^p(-)$, is a contravariant functor%
	\footnote{We shall not delve here into the details of category theory.
	It suffices to say that any statement that we shall make involving
	functors and categories will be simply a very terse transcription of
	some other property that will be spelled out in more elementary terms.
	More details about the functorial properties of de~Rham cohomology can
	be found in~\cite{bott-tu}.} %
from the category of smooth manifolds to the category of real vector
spaces. The same cannot be said for de~Rham cohomology with compact
supports, $H_c^p(-)$, because the pullback of a compactly supported form
need not be compactly supported itself. This pullback problem is fixed
by considering only proper%
	\footnote{A continuous map is \emph{proper} if the preimage of any compact
	set is compact.} %
smooth maps between manifolds. So, given a proper smooth map $f\colon
M\to N$, pullback along it induces a contravariant map between de~Rham
cohomologies in degree $p$ with compact support, $f^*\colon H^p_c(N) \to
H^p_c(M)$. If the map $f$ satisfies a different restrictive condition,
namely that it is an open embedding, it is possible to define a
covariant pushforward map $f_*\colon H^p_c(M) \to H^p_c(N)$: we can
identify $M$ with its image $f(M)$, an open subset of $N$, and extend by
zero any compactly supported form defined $M$ to all of $N$. In short,
de~Rham cohomology with compact supports, $H_c^p(-)$, defines a
contravariant functor on the category of smooth manifolds with proper
maps as morphisms, when paired with the pullback, while it defines a
covariant functor on the category of smooth manifolds with open
embeddings as morphisms, when paired with the pushforward.

A natural question is the following: do similar properties hold, and
under what precise conditions, for de~Rham cohomologies with causally
restricted supports? For instance, this question was briefly raised, but
without any definite answer, in~\cite{benini}. In fact, it is straight
forward to present causally restricted cohomologies as functors,
provided we modify the domain category by adding generalized causal
structures to manifolds (as in Section~\ref{sec:gen-caus}) and by
modifying the notion of a proper map with respect to the causal
structure.

Consider two conal manifolds $M$ and $N$, with a smooth map $f\colon
M\to N$ between them. We call the map $f$ \emph{reflectively
spacelike-proper} if the preimage of any spacelike compact set is also
spacelike compact, while we call it \emph{reflectively timelike-proper}
if the preimage of any timelike compact set is also timelike compact.
When the map $f$ is an open embedding, we also introduce the terminology
\emph{monotonically spacelike-proper} for the case when the image of any
spacelike compact set is itself spacelike-compact and
\emph{monotonically timelike-proper} for the case when the image of any
timelike compact set is timelike compact. We should note that the above
terminology is partly inspired by some general notions from the theory
of partially ordered sets. A map $f\colon M\to N$ between two partially
ordered sets $(M,\le)$ and $(N,\le)$ is said to be \emph{monotonic} if
$x \le y$ implies $f(x) \le f(y)$ and, on the other hand, it is said to
be \emph{order-reflecting} if $f(x) \le f(y)$ implies $x\le y$. The
following theorem is a straight forward generalization of the previous
arguments for the simpler case of compact supports.
\begin{thm}
Let $\CMan_{\SC}$ and $\CMan_{\tc}$ be the categories of conal manifolds
with, respectively, reflectively spacelike-proper and reflectively
timelike-proper, smooth maps as morphisms, while the $\CMan_{\SC}^{e}$
and $\CMan_{\tc}^{e}$ categories have, respectively, monotonically
spacelike-proper and monotonically timelike-proper open embeddings as
morphisms. Then, de~Rham cohomologies with spacelike and timelike
supports, $H^p_{\SC}(-)$ and $H^p_{\tc}(-)$, are contravariant functors on
$\CMan_{\SC}$ and $\CMan_{\tc}$, respectively. Similarly, $H^p_{\tc}(-)$
and $H^p_{\SC}(-)$ are covariant functors on $\CMan_{\tc}^{e}$ and
$\CMan_{\SC}^{e}$, respectively.
\end{thm}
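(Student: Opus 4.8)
\emph{Proof plan.} The plan is to imitate, almost verbatim, the argument recalled just above for de~Rham cohomology with compact supports, substituting the relevant causally restricted notion for ``compact'' and the appropriate reflectively/monotonically proper notion for ``proper map''/``open embedding''. A preliminary step is to check that the four stated constructs really are categories. Identities are trivially reflectively and monotonically proper of both types, and closure under composition is equally immediate: if $f\colon M\to N$ and $g\colon N\to P$ are reflectively spacelike-proper and $S\sse P$ is spacelike compact, then $g^{-1}(S)$ is spacelike compact in $N$, whence $(g\circ f)^{-1}(S)=f^{-1}\!\big(g^{-1}(S)\big)$ is spacelike compact in $M$; the timelike case, and the image-side statements for composites of monotonically proper open embeddings, are handled identically.

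For the contravariant assertions, let $f\colon M\to N$ be a morphism of $\CMan_{sc}$ (resp.\ $\CMan_{tc}$). Pullback $f^*\colon\Omega^p(N)\to\Omega^p(M)$ is defined since $f$ is smooth, it commutes with $\d$, and it satisfies $\supp f^*\beta\sse f^{-1}(\supp\beta)$. I would record the elementary fact that a closed subset of a spacelike (resp.\ timelike) compact set is again spacelike (resp.\ timelike) compact --- immediate from the definitions, using $\overline{I}^\pm$ in the conal setting --- so that $\supp f^*\beta$ inherits the required support restriction whenever $\supp\beta$ has it. Hence $f^*$ restricts to a cochain map between the correspondingly supported de~Rham complexes and so induces $f^*\colon H^p_{sc}(N)\to H^p_{sc}(M)$ (resp.\ for $tc$); the functorial identities $(g\circ f)^*=f^*\circ g^*$ and $\mathrm{id}^*=\mathrm{id}$ in cohomology descend from the same identities for pullback of forms.

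For the covariant assertions, let $f\colon M\to N$ be a morphism of $\CMan_{sc}^{e}$ (resp.\ $\CMan_{tc}^{e}$); identifying $M$ with the open subset $U=f(M)\sse N$, one extends forms by zero. Given $\alpha\in\Omega^p_{sc}(M)$ (resp.\ $\Omega^p_{tc}(M)$), its support $S=\supp\alpha$ is spacelike (resp.\ timelike) compact in $M$, so by the monotonic hypothesis $f(S)$ is spacelike (resp.\ timelike) compact, hence closed, in $N$. Therefore the zero-extension $f_*\alpha$ is a well-defined smooth $p$-form on $N$ --- smoothness because $\{U,\,N\setminus f(S)\}$ is an open cover of $N$ on which $f_*\alpha$ restricts to the smooth forms $\alpha$ and $0$ --- with $\supp f_*\alpha=f(S)$ of the required causal type. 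Since zero-extension commutes with $\d$, $f_*$ is a cochain map and induces $f_*\colon H^p_{sc}(M)\to H^p_{sc}(N)$ (resp.\ for $tc$), with covariant functoriality inherited from the evident identities for zero-extension of forms.

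The only point needing genuine care --- and hence the main, if mild, obstacle --- is the covariant step: one must know that the zero-extension of a causally restricted form across an open embedding is again smooth and again causally restricted. Both follow precisely because a spacelike or timelike compact subset of $N$ is in particular closed in $N$, which is exactly the condition packaged into monotonic-properness. Everything else reduces to the facts that pullback and zero-extension are cochain maps and are themselves (contra-\ resp.\ co-)functorial already at the level of differential forms, together with the closure-stability of causally restricted supports noted above.
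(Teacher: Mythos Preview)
Your proposal is correct and follows exactly the approach indicated by the paper, which simply states that the proof is a direct parallel of the standard argument for compactly supported cohomology, the definitions having been tailored precisely so that this parallel goes through. You have in fact supplied more detail than the paper does---including the verification that the categories are closed under composition, the hereditary property of causally restricted supports under closed subsets, and the observation that monotonic properness guarantees closedness of the image support in the target so that zero-extension is well defined---but all of this is implicit in the paper's one-line proof.
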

\begin{proof}
The proof is a direct parallel of the above arguments for the case with
compact supports, since the definitions have been specifically adapted
to that argument.
\end{proof}

To show that the definitions of spacelike- and timelike-proper maps are
in some sense natural, we give a couple of examples.

\begin{lem}
Let $M$ be a manifold and two conal structures on it, $C\sse C' \sse TM$
($C$ is ``slower'' than $C'$) (Section~\ref{sec:gen-caus}). The identity
map is a reflectively spacelike-proper from $(M,C')$ to $(M,C)$ and
reflectively timelike-proper from $(M,C)$ to $(M,C')$.
\end{lem}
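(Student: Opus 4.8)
The plan is to reduce both assertions to a single monotonicity observation: enlarging the cone distribution enlarges the chronological influence relation. If $C\sse C'$, then any smooth curve whose velocity lies in $C$ at every instant also has velocity in $C'$, so the chronological orders satisfy $I^\pm_C\sse I^\pm_{C'}$ as subsets of $M\times M$; passing to closures gives $\bar{I}^\pm_C\sse\bar{I}^\pm_{C'}$, and hence $\bar{I}_C(K)\sse\bar{I}_{C'}(K)$ for every $K\sse M$. This is the only nontrivial input; everything else is bookkeeping with the definitions of spacelike and timelike compactness from Section~\ref{sec:conal}.

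For reflective spacelike-properness of $\mathrm{id}\colon(M,C')\to(M,C)$, note that $\mathrm{id}^{-1}(S)=S$, so the claim is exactly that a set $S$ which is spacelike compact for $C$ is spacelike compact for $C'$. By definition $S$ is closed and $S\sse\bar{I}_C(K)$ for some compact $K$; by the monotonicity observation $S\sse\bar{I}_C(K)\sse\bar{I}_{C'}(K)$ with the same compact $K$, which is the desired conclusion.

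For reflective timelike-properness of $\mathrm{id}\colon(M,C)\to(M,C')$, I would show that a set $S$ which is timelike compact for $C'$ is timelike compact for $C$. Timelike compactness means past compact and future compact; for past compactness one must check that $S\cap\bar{I}^-_C(K)$ is compact for every compact $K$. Since $\bar{I}^-_C(K)\sse\bar{I}^-_{C'}(K)$, the set $S\cap\bar{I}^-_C(K)$ is a closed subset (the sets $S$ and $\bar{I}^-_C(K)$ are both closed) of $S\cap\bar{I}^-_{C'}(K)$, which is compact by hypothesis; a closed subset of a compact set is compact. The future-compact case is identical with the signs reversed. Alternatively, one can invoke the characterization of timelike compactness as having compact intersection with every spacelike compact set and combine it directly with the previous paragraph.

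I do not expect a genuine obstacle here: the lemma is essentially a consistency check on the definitions, specifically arranged so that the functoriality statement has natural examples. The only point requiring a little care is to work throughout with the closed sets $\bar{I}^\pm$ rather than with $I^\pm$, so that the "closed subset of a compact set is compact" step is legitimate — which is precisely the convention fixed in Section~\ref{sec:conal}.
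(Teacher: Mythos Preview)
Your proposal is correct and follows essentially the same route as the paper's proof: both rest on the monotonicity $\bar{I}_C(K)\sse\bar{I}_{C'}(K)$, deduce the spacelike-proper claim directly from it, and obtain the timelike-proper claim by noting that $S\cap\bar{I}_C(K)$ (or its $\pm$ pieces) is a closed subset of the compact set $S\cap\bar{I}_{C'}(K)$. Your version is slightly more explicit in separating the past- and future-compact halves and in flagging the closedness needed for the last step, but the argument is the same.
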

\begin{proof}
Let $K\sse M$ be any compact subset. Then, by hypothesis, the
$C$-influence set is smaller than the $C'$-influence set,
$\overline{I}_C(K) \sse \overline{I}_{C'}(K)$. Therefore, any
$C$-spacelike compact set is also $C'$-spacelike and hence the identity
from $(M,C')$ to $(M,C)$ is reflectively spacelike-proper. On the other
hand, if $U\sse M$ is $C'$-timelike compact, then we have the inclusion
$\overline{I}_C(K) \cap U \sse \overline{I}_{C'}(K) \cap U$, the latter
being compact. Therefore, $U$ is also $C$-timelike compact and the
identity from $(M,C)$ to $(M,C')$ is reflectively timelike-proper.
\end{proof}

\begin{lem}
Let $(M,g)$ and $(N,h)$ be two globally hyperbolic Lorentzian manifolds
and $f\colon M\to N$ an open isometric embedding, such that the image of
a Cauchy surface of $M$ is a Cauchy surface of $N$. Then, $f$ is
monotonically timelike-proper.
\end{lem}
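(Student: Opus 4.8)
The plan is to reduce the statement to a single geometric fact about how $M$ sits causally inside $N$. First I would identify $M$ with its image $f(M)$, an open submanifold of $N$, via the isometry $f$, and write $\Sigma$ for the image of the Cauchy surface of $M$ provided by the hypothesis; then $\Sigma$ is simultaneously a (smooth, spacelike) Cauchy surface of $M$ and of $N$. I would then fix a Cauchy temporal function $t_N$ on $N$ with $t_N^{-1}(0)=\Sigma$ --- such a function exists by the refinement of the Geroch splitting theorem that realises every spacelike Cauchy hypersurface as a level set of a Cauchy temporal function~\cite{bs1,bs2} --- and note that, since $\Sigma\sso M$, the restriction $t_N|_M$ is itself a temporal function on $M$ whose zero set is $\Sigma$. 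Everything hinges on what I will call the \emph{trapping property}: for every compact $K\sso N$,
\begin{equation*}
	\bigl(J^+_N(K)\cup J^-_N(K)\bigr)\cap M \ \sse\ J_M\bigl(J_N(K)\cap\Sigma\bigr),
\end{equation*}
where $J_N(K)\cap\Sigma$ is a compact subset of $\Sigma\sso M$ --- its compactness being the standard fact that the causal shadow of a compact set on a Cauchy surface is compact.

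Granting the trapping property, the conclusion follows in two short steps. Let $S\sso M$ be timelike compact in $M$. First, $S=f(S)$ is closed in $N$: a point $z$ of its $N$-closure not lying in $S$ would necessarily lie in $N\setminus M$ (as $S$ is closed in $M$) and be the limit of a sequence of distinct points $s_n\in S$; then $K_0:=\{z\}\cup\{s_n\}$ is a compact subset of $N$, so $\{s_n\}=K_0\cap M$ is spacelike compact in $M$ by the trapping property (it is closed in $M$ since its only limit point $z$ lies outside $M$), whence $S\cap\{s_n\}=\{s_n\}$ would be compact, which is impossible for an infinite discrete set. Second, for any compact $K\sso N$ the trapping property gives $S\cap J^-_N(K)\sse S\cap J_M(\tilde C)$ with $\tilde C:=J_N(K)\cap\Sigma$ compact in $M$; the right-hand side is compact because $S$ is past and future compact in $M$, and $S\cap J^-_N(K)$ is closed in $N$, hence compact. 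The same holds with $J^+_N(K)$, so $S$ is past and future compact --- i.e.\ timelike compact --- in $N$; as $S$ ranged over all timelike compact subsets of $M$, this shows $f$ to be monotonically timelike-proper.

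To establish the trapping property I would first prove a \emph{transfer statement}: if $x\in M$ with $t_N(x)\le 0$ and $\sigma\in\Sigma$ with $\sigma\in J^+_N(x)$, then already $\sigma\in J^+_M(x)$ (and symmetrically with past and future exchanged). Given an $N$-causal curve $\mu$ from $x$ to $\sigma$, let $a$ be the supremum of the parameters up to which $\mu$ still lies in $M$. Were $a$ not the endpoint of $\mu$, then $\mu(a)$ would lie on $\partial M$, and since $t_N$ strictly increases along $\mu$ and $t_N(\sigma)=0$ this would force $t_N(\mu(a))<0$, so that $\mu$ restricted to the parameters before $a$ lies entirely in $\{t_N<0\}$; this truncated curve is future-inextendible in $M$, and extending it to a past-inextendible causal curve in $M$ keeps it in $\{t_N<0\}$ (as $t_N$ decreases towards the past and $t_N(x)<0$, the borderline case $t_N(x)=0$ being excluded because $\Sigma$ is acausal), producing an inextendible $M$-causal curve disjoint from $\Sigma=\{t_N=0\}\cap M$ --- contradicting that $\Sigma$ is a Cauchy surface of $M$. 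Hence $\mu$ remains in $M$, so $\sigma\in J^+_M(x)$. Granted the transfer statement, the trapping property follows by a brief case analysis on $x\in J^\pm_N(K)\cap M$ according to its position relative to $\Sigma$: in each case one produces a point $\sigma\in J_N(K)\cap\Sigma$ with $x\in J^+_M(\sigma)\cup J^-_M(\sigma)$, invoking the transfer statement when an $N$-causal curve between $x$ and its witness $k\in K$ meets $\Sigma$, or after first sliding $k$ onto $\Sigma$ along an inextendible $N$-causal curve, or directly using that the inextendible $M$-causal curve through $x$ meets $\Sigma$, as the configuration requires.

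The crux --- and the main obstacle --- is the transfer statement, i.e.\ the assertion that $M$ is \emph{causally compatible} in $N$. It is precisely here that the hypothesis is used at full strength: what is essential is that $f$ carries a Cauchy surface of $M$ \emph{onto} a Cauchy surface of $N$, not merely onto a Cauchy surface of the subspacetime $f(M)$, nor merely onto a hypersurface of $N$ containing a Cauchy surface; this is exactly what prevents a causal curve from slipping out of $M$, since such a curve would otherwise yield an inextendible causal curve in $M$ missing $\Sigma$. The remaining ingredients --- the compactness of causal shadows and the manipulation of causal pasts and futures --- are routine.
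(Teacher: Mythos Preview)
Your argument is correct in outline, but it takes a substantially more laborious route than the paper. The paper's proof is three lines: it invokes the characterization from~\cite{sanders} that a closed set is timelike compact precisely when it lies between two Cauchy surfaces, picks such bounding Cauchy surfaces $\Sigma_1,\Sigma_2\sso M$ for the given timelike compact $U$, and observes that $f(\Sigma_1),f(\Sigma_2)$ are Cauchy surfaces of $N$ ``by hypothesis'', so that $f(U)$ is again sandwiched between two Cauchy surfaces of $N$ and hence timelike compact there. You instead work from the raw definition: you establish a causal-compatibility (``transfer'') statement for curves ending on the one shared Cauchy surface $\Sigma$, bootstrap it to a trapping property that controls $J_N(K)\cap M$ by an $M$-spacelike-compact set, and then verify closedness and the required compact intersections by hand. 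What your approach buys is self-containment --- you never appeal to the Sanders characterization --- and a more careful use of the hypothesis: you need only a \emph{single} Cauchy surface of $M$ whose image is Cauchy in $N$, whereas the paper's argument tacitly applies the hypothesis to the two arbitrary bounding surfaces $\Sigma_1,\Sigma_2$, i.e.\ reads it as ``the image of \emph{every} Cauchy surface of $M$ is a Cauchy surface of $N$''. What the paper's approach buys is brevity and a clear structural picture: once one accepts the between-two-Cauchy-surfaces characterization, timelike compactness is transported by the embedding for free, with no need to touch causal curves, trapping sets, or closedness arguments.
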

\begin{proof}
Let $U\sse M$ be timelike compact. According to~\cite{sanders}, this is
equivalent to $U$ being contained between two Cauchy surfaces in
$(M,g)$, say $\Sigma_1, \Sigma_2 \sso M$. This means that the image,
$f(U)$ is contained between $f(\Sigma_1)$ and $f(\Sigma_2)$, with the
latter, by hypothesis, being Cuachy surfaces in $(N,h)$. Thus, $f(U)$ is
also timelike compact and the map $f$ is monotonically timelike-proper.
\end{proof}

\subsection{Other differential complexes}\label{sec:other}
Our interest in computing the de~Rham and Calabi cohomologies with
causally restricted supports has was motivated by their importance in
understanding the geometric structure of classical and quantum field
theories~\cite{dl,sdh,bds,fl,benini,hs,kh-big,kh-peierls}. Namely, for a
general class of linear field theories, one can formulate sufficient
conditions for the non-degeneracy of the theory's Poisson structure and
the completeness of compactly supported smeared fields as physical
observables in terms of the cohomologies of corresponding differential
complexes. Non-linear field theories can be studied in terms of their
linearizations about arbitrary background solutions. To Maxwell
electrodynamics corresponds the de~Rham
complex~\cite[Sec.4.2]{kh-peierls}. To linearized gravity on constant
curvature backgrounds, corresponds the Calabi
complex~\cite[Sec.4.4]{kh-peierls}. Similarly, to Yang-Mills linearized
about a flat connection corresponds a twisted de~Rham complex.

Each of these examples can be treated using the methods presented in
this paper. Few other explicit examples of differential complexes
corresponding to other field theories of physical interest seem to be
known. In particular, they do not seem to be known for linearized
gravity on non-constant curvature backgrounds and, perhaps, not even for
Yang-Mills linearized about non-flat connections. On the other hand,
there are strong abstract reasons to believe that such differential
complexes do indeed exist~\cite{quillen,goldschmidt-lin,pommaret}.

If such a differential complex also shares the apparently crucial
property of admitting cochain homotopies that generate hyperbolic and
elliptic cochain maps (cf.\ the $E_l^g$, $P_l^g$, $E_l^{g_R}$ and
$P_l^{g_R}$ maps of Sections~\ref{sec:calabi-ops}
and~\ref{sec:calabi-coh}), then its causally restricted cohomologies can
be related to those with unrestricted and compactly supported ones, as
in Theorems~\ref{thm:sc-tc} and~\ref{thm:calabi}.

If, in addition, such a differential complex could also be seen as
resolving a locally constant sheaf, its unrestricted cohomologies could
be computed by algebraic means, without actually solving complicated
systems of differential equations, as in Section~\ref{sec:calabi-coh}.
The latter requirement is closely related to the initial differential
operator in the complex having only a finite dimensional space of
solutions (being of \emph{finite type}), as is the case for the
\emph{locally constant} (de~Rham) and \emph{Killing} (Calabi)
conditions.

The compactly supported cohomologies could also be obtained if the
corresponding formally adjoint complex satisfied similar requirements,
as illustrated in Section~\ref{sec:calabi-coh} by the appearance of the
locally constant sheaf of Killing-Yano tensors.

\section{Discussion}\label{sec:discuss}
We have shown how to compute the de~Rham cohomology with causally
restricted supports (retarded, advanced, past compact, future compact,
spacelike compact and timelike compact) on a globally hyperbolic
Lorentzian spacetime, using special properties of the d'Alembert wave
operator and its Green functions.  The result (Theorems~\ref{thm:pf}, \ref{thm:sc-tc}
and Corollary~\ref{cor:pairing}) expresses these causally restricted
cohomologies in terms of the standard de~Rham cohomologies of the
spacetime manifold, with either unrestricted or compact supports. These
results, confirm the independent similar results of the
recent work~\cite{benini}. However, since our method does not rely on
the strong invariance properties of the de~Rham complex under
topological homotopies, we have also obtained further results. In
particular, our method is also applicable to the Calabi complex
(Theorem~\ref{thm:calabi}). The Calabi complex appears in linearized
gravity on constant curvature backgrounds in a way similar to the
de~Rham complex in Maxwell theory. These results answer some questions
that have naturally arisen in recent investigations of classical and quantum
gauge theories on curved spacetimes.

Finally, we have also made comments about other questions that have
naturally appeared in these investigations. Namely, we discussed the
covariance of causally restricted cohomologies under specific types of
morphisms between spacetimes, adapted to their causal structure, and
under changes of the causal structure itself.

We have presented almost the bare minimum of information about the
Calabi complex that is needed to obtain our results. A fuller discussion
of this interesting complex, including relevant geometric properties
that are difficult to locate in or are absent from the current literature,
is deferred to future work~\cite{kh-calabi}. In the future, it will also
be interesting to find the analogs of the Calabi complex on more general
Lorentzian backgrounds, which would consist of differential complexes
resolving the sheaf of Killing vectors on a given background. However,
we conjecture that the Hodge-like structure that we have used to compute
causally restricted cohomologies will be shared by all of them.

\section*{Acknowledgments}
The author would like to thank Marco Benini for useful discussions and
also Romeo Brunetti, Claudio Dappiaggi, and Valter Moretti for their
support in the course of this work.

\bibliographystyle{utphys-alpha}
\bibliography{paper-causcohom}

\end{document}